%
%
%
%
\documentclass[reqno]{amsart}

\usepackage[top=1in, bottom=1in, left=1in, right=1in]{geometry}
\usepackage{tipa}
\usepackage{amssymb}
\usepackage{graphicx}
\usepackage{hyperref}
\usepackage{enumerate}
\usepackage{xcolor}
\usepackage{color}
\usepackage{mathrsfs}     
\usepackage{dsfont}
\usepackage{makecell}    
\usepackage[format=plain,justification=centering,singlelinecheck=false,]{caption}   
\usepackage[FIGTOPCAP]{subfigure} 

\usepackage{amsmath}    
\allowdisplaybreaks[4]

\def\be{\begin{equation}}
\def\ee{\end{equation}}



\newtheorem{theorem}{Theorem}[section]
\newtheorem{lemma}[theorem]{Lemma}
\newtheorem{prop}[theorem]{Proposition}

\newtheorem{remark}{Remark}[section]

\numberwithin{equation}{section}

\newcommand{\abs}[1]{\left|#1\right|}                     



%
%
%

\begin{document}

\title[Consistent Pricing of VIX and Equity Derivatives with the 4/2 Model]{Consistent Pricing of VIX and Equity Derivatives with the 4/2 Stochastic Volatility Plus Jumps Model}

\author[W. Lin]{Wei Lin}
\address{School of Mathematical Sciences , Zhejiang University, Hangzhou, 310027, People's Republic of China}
\email{weilin1991@zju.edu.cn; mathslin@126.com}
\thanks{}

\author[S. H. Li]{Shenghong Li}
\address{School of Mathematical Sciences , Zhejiang University, Hangzhou, 310027, People's Republic of China}
\email{shli@zju.edu.cn}

\author[X. G. Luo]{Xingguo Luo}
\address{College of Economics, Zhejiang University, Hangzhou, 310027, People's Republic of China}
\email{xgluo@zju.edu.cn}

\author[S. Chern]{Shane Chern}
\address{School of Mathematical Sciences , Zhejiang University, Hangzhou, 310027, People's Republic of China}
\email{shanechern@zju.edu.cn; chenxiaohang92@gmail.com}

\subjclass[2010]{91G20}

\date{}

\dedicatory{}

\keywords{Stochastic volatility; 4/2 Model; VIX derivatives; Transform}

\begin{abstract}
%
In this paper, we develop a 4/2 stochastic volatility plus jumps model, namely, a new stochastic volatility model including the Heston model and 3/2 model as special cases. Our model is highly tractable by applying the Lie symmetries theory for PDEs, which means that the pricing procedure can be performed efficiently. In fact, we obtain a closed-form solution for the joint Fourier-Laplace transform so that equity and realized-variance derivatives can be priced. We also employ our model to consistently price equity and VIX derivatives. In this process, the quasi-closed-form solutions for future and option prices are derived.  Furthermore, through adopting data on daily VIX future and option prices, we investigate our model along with the Heston model and 3/2 model and compare their different performance in practice. Our result illustrates that the 4/2 model with an instantaneous volatility of the form $(a\sqrt{V_t}+b/\sqrt{V_t})$ for some constants $a, b$ presents considerable advantages in pricing VIX derivatives.
\end{abstract}

\maketitle

\section{Introduction}
Since the Chicago Board Options Exchange (CBOE) launched the CBOE Volatility Index (VIX) futures in March 2004 and later VIX options in February 2006, the trading volume of derivatives on the VIX index has grown considerably over the last decade and become popular among investors. One reason is that VIX derivatives provide investors with a mechanism to directly and effectively invest in the volatility of the S\&P500 index without having to factor in the price changes of the underlying instrument, dividends, interest rates or time to expiration. Moreover VIX derivatives are the first of an entire family of volatility products to be traded on exchanges. The index is also known as the ``fear gauge" as in terms of market turmoil and large price movements, for the VIX index tends to rise, whereas when the market is easing upward in a long-run bull market, the VIX index remains low and steady. Naturally, this development has fueled the demand for models that are capable of simultaneously reproducing the observed characteristics of products on both indices, since derivatives products are traded on both the underlying index and the volatility index. Here, model that are able to capture these joint characteristics are known as consistent models. 

A growing body of literature has emerged on the joint modeling of equity and VIX derivatives. One approach is adopted in Zhang and Zhu \cite{ZZ2006}, Sepp \cite{S2008}, Zhu and Lian \cite{ZL2012}, Lian and Zhu \cite{LZ2013} and Baldeaux and Badran \cite{BB2014}. The discounted price of derivatives can be expressed as a strict local martingale once the instantaneous dynamics of the underlying index are specified under the (putative) risk neutral probability measure $\mathbb{Q}$. Zhang and Zhu \cite{ZZ2006} derived an analytic formula for VIX futures under the assumption that the S\&P500 is modeled by Heston \cite{H1993}. Baldeaux and Badran \cite{BB2014} come up with more general formulae which allow for an empirical analysis to be performed to assess the appropriateness of the 3/2 framework the consistent pricing of equity and VIX derivatives. The Heston model \cite{H1993} takes the instantaneous variance as a mean reverting squared Bessel process (usually called CIR or square root process since it displays a power 1/2 in the diffusion term), while the subsequent 3/2 model of Heston \cite{H1997} and Platen \cite{P1997} which is the inverse of a CIR process and, is still mean reverting with a power 3/2 in the diffusion term. Meanwhile, Grasselli \cite{G2014} mentioned a less-known stochastic volatility model that combines as special cases the classic Heston with 3/2 model. This model considers as the superposition of the 1/2 and the 3/2 terms, i.e., we introduce an instantaneous volatility of the form $\left(a\sqrt{V_t}+b/\sqrt{V_t}\right)$ for some constants a, b, where $V_t$ is the CIR process. Although these authors provided exact solution by characteristic function approach for the price of VIX derivatives when the S\&P500 is modeled by either Heston model or 3/2 model with simultaneous jumps in the underlying index, it remains to be shown whether 4/2 model is able to price the equity and VIX derivatives consistently. This paper aims to fill this vacuum.

Heston model has been justified a successful model in literature and in the banking industry for many reasons, e.g., smile and skew to be reproduced with parsimonious number of parameters, clear financial meaning on each parameter and its tractability. In addition, under a certain parameter restriction (the Feller condition), the volatility process remains strictly positive, which constitutes a nice property of the model. However, some shortcomings have been shown immediately in calibration of the model on real data. Feller condition is often violated because a high volatility-of-volatility parameter is required to fit the steep skews in equity markets. Moreover, when instantaneous volatility increases, the skew will flatten. Then, the Heston model assigns significant weight to very low and vanishing volatility scenarios and is unable to produce extreme paths with high volatility of volatility. Having said that, the Heston model still remains a good benchmark for any-single-factor stochastic volatility model that can be quickly calibrated on the market. 

The selection of the inverse of a CIR model (3/2 model) for the underlying index is motivated by several observations in recent literatures. Compared with Heston model, both empirical and theoretical evidences suggest that the 3/2 model is a reasonable candidate for modeling instantaneous variance due to quick reversion when the process is high. Baldeaux and Badran \cite{BB2014} presented joint modeling of equity and VIX derivatives when the underlying index follows a 3/2 process with jumps in the index only. Following his conclusions, for 3/2 model, the implied volatility of VIX option are upward-sloping, which were consistent with market data. In fact, in the Heston model, the short-term skew flattens when the instantaneous variance increases, whereas in the 3/2 model, the short-term skew steepens when the instantaneous variance increases. Finally, applying to Fourier methodology, 3/2 model also remains to obtain an acceptable level of tractability when pricing derivatives products since closed form characteristic functions consist special functions like the hypergeometric confluent and Gamma functions.  

There are two main contributions of this paper. Under the assumption that underlying index follows the 4/2 plus jumps model, one contribution is the derivation of closed-form solution for Fourier-Laplace transform of log-equity and realized-variance. The other is the derivation of quasi-closed-form solutions for the pricing of VIX future and option. Our model combines the properties of both the Heston and 3/2 models. The 4/2 model goes beyond classic models: it behaves as a two-factor model where the stochastic factors $\sqrt{V_t}$ and $1/\sqrt{V_t}$ are closely related (they are indeed perfectly correlated) but still maintaining different properties in explaining the implied volatility surface, as preciously illustrated. In order to capture features of implied volatility in equity option for short maturities, jumps are needed in the underlying index. We also discuss whether 4/2 model plus jumps is a martingale and derive the conditions so that the discounted stock price is a martingale under the pricing measure. Despite the simplicity of the financial framework, the 4/2 model plus jumps leads a problem of computing an expectation that goes beyond the known results on squared Bessel processes. So we provide an explicit solution to the problem by using  results of Grasselli \cite{G2014} based on the theory of Lie symmetries for PDEs. Finally, we use this approach to obtain a quasi-closed-form solution for VIX futures and options in the 4/2 model plus jumps, slightly extending of the stochastic volatility pricing formula presented in Zhang and Zhu \cite{ZZ2006}, Lian and Zhu \cite{LZ2013}, Baldeaux and Badran \cite{BB2014} and Grasselli \cite{G2014}.

The structure of the paper is as follows: The 4/2 model plus jumps is introduced in Section 2. Next in Section 3, we investigate the martingale property of the discount asset price and establish the conditions which ensure that the discounted stock price is a martingale under the assumed pricing measure. Characteristic functions for the logarithm of the index and the realized variance are derived in Section 4. Furthermore, the quasi-closed formulae for futures and options price on the VIX is found in Section 5. The data to be used in the empirical analysis is described in Section 6. Then in Section 7, model parameters are then estimated using historical VIX data. VIX options pricing formulae are tested against the market data in Section 8. Finally, Section 9 provides a brief conclusion.

\section{The 4/2 Plus Jumps Model}\label{sec1}
In this section, we introduce the 4/2 stochastic volatility plus jumps model. On a probability space ($\Omega$, $\mathscr{F}$, $\mathbb{Q}$), consider the risk-neutral dynamics for the stock price with non-dividend-paying and the variance processes according to the following SDE:
\begin{equation}\label{eq2.1}
\frac{dS_t}{S_t}=\left(r-\lambda \tilde{\mu}\right)dt+\left(a\sqrt{V_t}+\frac{b}{\sqrt{V_t}}\right)dZ_t+\left(e^{\xi}-1\right)dN_t,
\end{equation}
where the stochastic factor $V$ evolves as
\begin{equation}\label{eq2.2}
dV_t=\kappa \left(\theta-V_t\right) dt+\sigma \sqrt{V_t}dW_t,
\end{equation}
with $r,\kappa, \theta, \in \mathbb{R}_+; a, b, \sigma \in \mathbb{R}$. The Brownian motion $Z,W$ are defined on a filtered probability space $(\Omega, \mathscr{F}, (\mathscr{F}_{t\ge0}),\mathbb{Q})$ and correlated through $d\langle W,Z \rangle_t=\rho dt$ and $V_0=v \in \mathbb{R}_+$. As usual, $\rho$ satisfies $-1\le \rho \le 1$. We denote  by $N$ a Possion process at constant rate $\lambda$, by $e^{\xi}$ the relative jump size of the stock and $N$ is adapted to a filtration $(\mathscr{F}_t)_{\in[0,T]}$. The distribution of $\xi$ is assumed to be normal with mean $\mu$ and variance $\eta^2$, where the parameters $\mu,\ \tilde{\mu}$ and $\eta$ satisfy the following relationship:

\begin{equation}\label{eq2.3}
\tilde{\mu}=\exp\left({\mu+\frac{1}{2}\eta^2}\right)-1
\end{equation}

Here $r$ stands for the riskless interest rate, i.e we identify $\mathbb{Q}$ with the risk neutral probability measure. From \eqref{eq2.2} we can recover the Heston stochastic volatility model and the 3/2 model by taking $b=0$ and $a=0$ respectively. In fact, note that taking $X_t={V_t}^{-1}$, it follows

\begin{equation}\label{eq2.4}
dX_t= \tilde{\kappa} X_t \left( \tilde{\theta}-X_t \right) dt+\tilde{\sigma} {X_t}^{\frac{3}{2}}dW_t,
\end{equation}
where
\begin{align*}
\tilde{\kappa}=\kappa \theta - \sigma^2,\quad 
\tilde{\theta}=\frac{\kappa}{\kappa \theta-\sigma^2},\quad
\tilde{\sigma}=-\sigma.
\end{align*}

It is worth noting that unlike the Heston model, the above model has a non-linear drift. The speed of mean reversion is not constant, as is the case for the Heston model, but is now a stochastic quantity and is proportional to the instantaneous variance.

Integrating Eq.\eqref{eq2.1} and splitting the Brownian motion $Z$ into $W$ and its orthogonal part $W^{\perp}$ yields
\begin{equation}
S_t=\tilde{S_t} \prod_{j=1}^{N_t}e^{\xi_j},
\end{equation}
where

\begin{align*}
\tilde{S_t}=&S_0\exp \Bigg(\left(r-\lambda \tilde{\mu}\right)t-\frac{1}{2}\int_{0}^{t}\left(a\sqrt{V_s}+\frac{b}{\sqrt{V_s}}\right)^2dt+\rho \int_{0}^{t}\left(a\sqrt{V_s}+\frac{b}{\sqrt{V_s}}\right)dW_s\\
&+\sqrt{1-\rho^2}\int_{0}^{t}\left(a\sqrt{V_s}+\frac{b}{\sqrt{V_s}}\right)dW_s^{\perp}\Bigg)
\end{align*}
and $\xi_j$ denote the logarithm of the relatives jump size of the $j$th jump.

\section{Strict Local Martingale Property of the Discounted Asset}
Even though the model in \eqref{eq2.1} and \eqref{eq2.2} is not affine, we can now determine if the discounted stock price is a martingale under our assumed pricing measure. In order to see whether the process $\bar{S_t}=\frac{S_t}{e^{rt}}$ is a martingale, and not just a local martingale, the Feller non-explosion test for $V_t$ must be satisfied under both historical and risk neutral probability measures. This condition has been first established by Drimus \cite{D2012} who imposed the martingale property to the discounted asset in his calibration.

\begin{prop}
Let $S$ and $V$ be given by Eqs. \eqref{eq2.1} and \eqref{eq2.2} respectively. Then the discounted stock price $\bar{S_t}=\frac{S_t}{e^{rt}}$ is a martingale, and not just a local martingale under $\mathbb{Q}$ ,if and only if 
\begin{equation}
2\kappa\theta+2\rho\sigma b < \sigma^2 \le 2\kappa\theta.
\end{equation}

\begin{proof}
\begin{align}\label{eq3.2}
&\mathbb{E}\left(\bar{S}_T | F_t\right)\notag\\
=&\bar{S_t}\mathbb{E}\left[\exp\left(-\frac{1}{2}\int_{t}^{T}\left(a\sqrt{V_s}+\frac{b}{\sqrt{V_s}}\right)^2ds+\rho \int_{t}^{T} \left(a\sqrt{V_s}+\frac{b}{\sqrt{V_s}}\right)dW_s\right.\right.\notag\\
&\quad\left.\left.+\sqrt{1-\rho^2}\int_{t}^{T}\left(a\sqrt{V_s}+\frac{b}{\sqrt{V_s}}\right)dW_s^{\perp}\right) \Bigg|F_t\right]\times \mathbb{E}\left(\prod_{j=N_t+1}^{N_T}e^{\xi_j}\right) e^{-\lambda \tilde{\mu}(T-t)}\notag\\
=&\bar{S_t}\mathbb{E}\left[ \exp \left(-\frac{1}{2}\int_{t}^{T}\left(a\sqrt{V_s}+\frac{b}{\sqrt{V_s}}\right)^2ds+\int_{t}^{T}   \left(a\sqrt{V_s}+\frac{b}{\sqrt{V_s}}\right)dZ_s\right) \right]\notag\\
=&\bar{S_t}\mathbb{E}\left[ \exp \left(-\frac{\rho^2}{2}\int_{t}^{T}\left(a\sqrt{V_s}+\frac{b}{\sqrt{V_s}}\right)^2ds+\rho\int_{t}^{T}\left(a\sqrt{V_s}+\frac{b}{\sqrt{V_s}}\right)dW_s\right) \right]\\
=&\bar{S_t}\mathbb{E}\left[\zeta_{t,T}\right]\notag
\end{align}
where we define the exponential local martingale process $\zeta_{t,T}=\{\zeta_s,t\le s\le T\}$ via
\begin{equation}
\zeta_t:=\exp \left(-\frac{\rho^2}{2}\int_{0}^{t}\left(a\sqrt{V_s}+\frac{b}{\sqrt{V_s}}\right)^2ds+\rho\int_{0}^{t}\left(a\sqrt{V_s}+\frac{b}{\sqrt{V_s}}\right)dW_s\right)
\end{equation}

Equation \eqref{eq3.2} is clearly independent of the jump component of $S$. Hence $\bar{S}$ is martingale under $\mathbb{Q}$ when the Feller non-explosion test for $V_t$ must be satisfied under both historical and risk neutral probability measures. Since this question was answered in Grasselli \cite{G2014}, see his equation (8), (9) and (10), the desired result follows.
\end{proof}
\end{prop}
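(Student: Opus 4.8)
The plan is to remove the jump term, reduce the claim to the true-martingale property of a single exponential of a functional of $V$, and then read off the parameter region from the known closed-form evaluation of that functional.

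First I would expand $\mathbb E[\bar S_T\mid\mathscr F_t]$ as in \eqref{eq3.2}, using that the Poisson process $N$ and the i.i.d.\ jump sizes $\{\xi_j\}$ are independent of the Brownian motions and of $V$. Conditioning on the jump times, the jump contribution is $\mathbb E\!\big[\prod_{j=N_t+1}^{N_T}e^{\xi_j}\big]\,e^{-\lambda\tilde\mu(T-t)}=\exp\!\big(\lambda(T-t)(e^{\mu+\eta^2/2}-1)\big)\,e^{-\lambda\tilde\mu(T-t)}$, which is identically $1$ by \eqref{eq2.3}; hence the jumps play no role. Writing $Z=\rho W+\sqrt{1-\rho^2}\,W^{\perp}$ and abbreviating $\phi(v):=a\sqrt v+b/\sqrt v$, I would then condition on $\sigma(V_s,W_s:s\le T)$: given that $\sigma$-field the $W^{\perp}$-factor $\exp\!\big(\sqrt{1-\rho^2}\int_t^T\phi(V_s)\,dW_s^{\perp}-\tfrac{1-\rho^2}{2}\int_t^T\phi(V_s)^2\,ds\big)$ is an ordinary log-normal exponential martingale with mean $1$, \emph{provided} $\int_t^T\phi(V_s)^2\,ds<\infty$ a.s. This reduces $\mathbb E[\bar S_T\mid\mathscr F_t]$ to $\bar S_t\,\mathbb E[\zeta_{t,T}\mid\mathscr F_t]$, and by the time-homogeneous Markov property of $V$ the whole question becomes whether $\zeta=\mathcal E\!\big(\rho\int_0^{\cdot}\phi(V_s)\,dW_s\big)$ is a true martingale, i.e.\ whether $\mathbb E^{\mathbb Q}[\zeta_T]=1$.

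There are then two things to control. On one hand, $\zeta$ must be a bona fide continuous local martingale, which requires $\int_0^T\phi(V_s)^2\,ds<\infty$ a.s.; since $\phi(v)^2=a^2v+2ab+b^2v^{-1}$ and $\int_0^TV_s\,ds<\infty$ always, this is the a.s.\ finiteness of $\int_0^TV_s^{-1}\,ds$, i.e.\ a Feller-type non-explosion requirement on the square-root process \eqref{eq2.2} under $\mathbb Q$ (which is also exactly what makes the conditioning step above legitimate). On the other hand, given that $\zeta$ is well defined, a Girsanov change of measure along $\zeta$ sends $W$ to $W^{*}$ with $dW_t=dW^{*}_t+\rho\phi(V_t)\,dt$, so that $V$ acquires the extra drift $\rho\sigma\sqrt{V_t}\,\phi(V_t)=\rho\sigma(aV_t+b)$ and becomes another square-root diffusion, $dV_t=\big[(\kappa\theta+\rho\sigma b)-(\kappa-\rho\sigma a)V_t\big]dt+\sigma\sqrt{V_t}\,dW^{*}_t$; the equality $\mathbb E^{\mathbb Q}[\zeta_T]=1$ then holds precisely when no mass leaks out along this tilt, a condition controlled by the boundary behaviour at $0$ of this auxiliary square-root diffusion. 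Both requirements together are, quantitatively, exactly the content of Grasselli \cite{G2014}: the expectation $\mathbb E^{\mathbb Q}[\zeta_T]$ (a functional of the CIR process) is the one he evaluates in closed form by reducing the associated linear Kolmogorov PDE via Lie-symmetry methods, and his equations (8)--(10) deliver both the formula and the admissible region $2\kappa\theta+2\rho\sigma b<\sigma^2\le 2\kappa\theta$. Combining that with the reduction above proves the proposition.

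The main obstacle is the second control: the two reduction steps are essentially bookkeeping, whereas proving that the Feller/boundary conditions are \emph{necessary} (and not merely sufficient) for $\zeta$ to be a true martingale — so that the ``only if'' direction holds — is the substantive part. Rather than reprove that here, I would invoke Grasselli's explicit formula for $\mathbb E^{\mathbb Q}[\zeta_T]$, which already delimits the exact admissible region.
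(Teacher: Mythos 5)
Your proposal follows essentially the same route as the paper: cancel the jump contribution via the compensator relation \eqref{eq2.3}, integrate out the $W^{\perp}$ component by conditioning on the path of $V$, reduce the question to whether $\mathbb{E}[\zeta_{t,T}]=1$, and then invoke Grasselli's equations (8)--(10) for the exact parameter region --- which is precisely what the paper's proof does. The only difference is that you additionally sketch the Girsanov-tilt mechanism behind Grasselli's condition (the auxiliary square-root diffusion with drift $\kappa\theta+\rho\sigma b-(\kappa-\rho\sigma a)V_t$ and its boundary behaviour at $0$), a level of detail the paper leaves entirely to the citation.
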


\section{Equity and Realized-Variance Derivatives}
In this section, we derive formulae for the pricing of equity and realized-variance derivatives under the 4/2 plus jumps model. We demonstrate that 3/2 plus jumps model is included in 4/2 plus jumps model so that a better fit to short-term smile can be obtained (Baldeaux and Badran \cite{BB2014}). Moreover, 4/2 plus jumps model is still analytic tractable. 

Consider
\begin{equation}
X_t:=\log S_t,\ t\ge0;
\end{equation}
and define the realized variance as the quadratic variation of $X$, viz.,
\begin{equation}
RV_T:=\int_0^T\left({a\sqrt{V_t}+\frac{b}{\sqrt{V_t}}}\right)^2dt+\sum_{j=1}^{N_T}(\xi_j)^2,
\end{equation}
where $RV_T$ denotes realized variance and $T$ denotes the maturity of time. We have the following Theorem \ref{thm4.1}, which is an extension of Proposition 3.2 in Baldeaux and Badran \cite{BB2014} and Proposition 1 in Grasselli \cite{G2014}.

\begin{theorem}\label{thm4.1}
Let $u\in \mathcal{D}_{t,T} \subset \mathbb{C}$ and $l \in \mathbb{R}^+$.In the 4/2 plus jumps model, the joint Fourier-Laplace transform of $X_T$ and $(RV_T-RV_t)$ is given by
\begin{align}
\mathbb{E}&\left[ \exp(uX_T-l(RV_T-RV_t)) \bigg| V_t,X_t\right]\notag\\
&\quad=e^{uX_t}\exp\left(\frac{\kappa^2\theta}{\sigma^2}\tau+u\left(r-\lambda\tilde{\mu}-ab-\frac{\kappa\theta a\rho}{\sigma}+\frac{\rho b\kappa}{\sigma}\right)\tau-2lab\tau+ab\tau(1-\rho^2)u^2\right)\notag\\
&\quad\quad\times\left(\frac{\sqrt{A_u}}{\sigma^2\sinh\left(\frac{\sqrt{A}\tau}{2}\right)}\right)^{m_u+1}V_t^{-\frac{bu\rho}{\sigma}-\frac{\kappa\theta}{\sigma^2}+\frac{1}{2}+\frac{m_u}{2}}\frac{\Gamma\left(\frac{bu\rho}{\sigma}+\frac{\kappa\theta}{\sigma^2}+\frac{1}{2}+\frac{m_u}{2}\right)}{\Gamma\left(m_u+1\right)}\notag\\
&\quad\quad\times\exp\left(\frac{V_t}{\sigma^2}\left(-au\rho\sigma-\sqrt{A_u}\coth\left(\frac{\sqrt{A_u}\tau}{2}\right)+\kappa\right)\right)\exp({\lambda\tau({c-1})})\notag\\
&\quad\quad\times_1F_1\left(\frac{bu\rho}{\sigma}+\frac{\kappa\theta}{\sigma^2}+\frac{1}{2}+\frac{m_u}{2},m_u+1,\frac{A_uV_t}{\sigma^4\sinh^2\left({\frac{\sqrt{A}\tau}{2}}\right)\left({K_u(t)-\frac{ua\rho}{\sigma}}\right)}\right)\notag\\
&\quad\quad\times\left({-\frac{ua\rho}{\sigma}+K_u({t})}\right)^{-\left({\frac{bu\rho}{\sigma}+\frac{\kappa\theta}{\sigma^2}+\frac{1}{2}+\frac{m_u}{2}}\right)}
\end{align}
where
\begin{align}
\tau&=T-t,\\
c&=\frac{\exp\left({\frac{2u\mu-2l\mu^2+u^2\eta^2}{2+4l\eta^2}}\right)}{\sqrt{1+2l\eta^2}},\\
A_u&=\kappa^2-2\sigma^2\left({\frac{\kappa au\rho}{\sigma}+\frac{1}{2}u^2({1-\rho^2})a^2-a^2\left({\frac{u}{2}+l}\right)}\right),\\
m_u&=\frac{2}{\sigma^2}\sqrt{\left({\kappa\theta-\frac{\sigma^2}{2}}\right)^2-2\sigma^2\left[\frac{bu\rho({\sigma^2-2\kappa\theta})}{2\sigma}+\frac{1}{2}u^2({1-\rho^2})b^2-b^2\left({\frac{u}{2}+l}\right) \right]},\\
K_u({t})&=\frac{1}{\sigma^2}\left({\sqrt{A_u}\coth\left({\frac{\sqrt{A_u}\tau}{2}}\right)+\kappa}\right).
\end{align}
Here $\Gamma$ and$\ _1F_1$ denote the Gamma function and hypergeometic confluent function, respectively. The transform is well defined for all $\tau \ge 0$ when the complex number u belongs to the strip $\mathcal{D}_{t,+\infty}=\mathcal{A}_{t,+\infty}\times i\mathbb{R}\subset\mathbb{C}$, where the convergence set $\mathcal{A}_{t,+\infty} \subset \mathbb{R}$ is given by
\begin{equation}
\mathcal{A}_{t,+\infty}=\left\{ u\subset\mathbb{R}:A({u})\ge0\text{ and }f_j({u})\ ({j=1,\dots,4}) \text{ satisfy \eqref{eq4.9}--\eqref{eq4.12}} \right\}\notag
\end{equation}
with
\begin{align}
f_1({u})&=\frac{\kappa au\rho}{\sigma}-a^2\left({\frac{u}{2}+l}\right)+\frac{1}{2}u^2({1-\rho^2})a^2-\frac{\kappa^2}{2\sigma^2}\le0,\label{eq4.9}\\
f_2({u})&=\frac{\kappa\theta}{\sigma^2}+\frac{1}{2}+\frac{m_u}{2}+\frac{ub\rho}{\sigma}>0,\label{eq4.10}\\
f_3({u})&=\left({\kappa\theta-\frac{\sigma^2}{2}}\right)^2-2\sigma^2\left[\frac{bu\rho({\sigma^2-2\kappa\theta})}{2\sigma}-b^2\left({\frac{u}{2}+l}\right)+\frac{1}{2}u^2({1-\rho^2})b^2\right]\ge0,\label{eq4.11}\\
f_4({u})&=\sqrt{A_u}+\kappa-ua\rho\sigma\ge0.\label{eq4.12}
\end{align}

Moreover, let
$$\mathcal{A}_{t,T}=\left\{ u\subset\mathbb{R}:A({u})\ge0\text{ and }f_j({u})\ ({j=1,\dots,3}) \text{ satisfy \eqref{eq4.9}--\eqref{eq4.11}},\ f_4({u})<0 \right\},$$
apparently $\mathcal{A}_{t,T}\supset\mathcal{A}_{t,+\infty}$. For $u \in \mathcal{D}_{t,T}=\mathcal{A}_{t,T}\times i\mathbb{R}$, the transform is well defined till the maximal (explosion) time $t+t^*$ where $t^*$ is given by
\begin{equation}
t^*=\frac{1}{\sqrt{A_u}}\log\left[1-\frac{2\sqrt{A_u}}{\kappa-\sigma ua\rho+\sqrt{A_u}}\right]
\end{equation}
\end{theorem}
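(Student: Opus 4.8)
The plan is to reduce the joint transform of $(X_T, RV_T-RV_t)$ to an expectation that depends only on the CIR factor $V$, and then to evaluate that expectation using the Lie-symmetry results of Grasselli \cite{G2014}. First I would condition on the path of $V$ and on the Poisson process $N$. Writing $X_T = X_t + (r-\lambda\tilde\mu)\tau + \int_t^T(a\sqrt{V_s}+b/\sqrt{V_s})\,dZ_s + \sum_{j=N_t+1}^{N_T}\xi_j$ and $RV_T-RV_t = \int_t^T(a\sqrt{V_s}+b/\sqrt{V_s})^2\,ds + \sum_{j=N_t+1}^{N_T}\xi_j^2$, the jump contribution factorizes: since $\xi_j\sim\mathcal{N}(\mu,\eta^2)$ i.i.d.\ and independent of everything else, $\mathbb{E}\big[\exp(u\sum\xi_j - l\sum\xi_j^2)\big]$ conditioned on $N_T-N_t$ is a product of one-dimensional Gaussian integrals, each equal to $c$ as defined in the statement (completing the square in the exponent $u\xi - l\xi^2$ against the $\mathcal{N}(\mu,\eta^2)$ density). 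Summing over $N_T-N_t\sim\mathrm{Poisson}(\lambda\tau)$ gives the factor $\exp(\lambda\tau(c-1))$. This disposes of the jump terms and of the $e^{u(r-\lambda\tilde\mu)\tau}$ prefactor.

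Next I would handle the Brownian integral $\int_t^T(a\sqrt{V_s}+b/\sqrt{V_s})\,dZ_s$ by splitting $Z = \rho W + \sqrt{1-\rho^2}\,W^\perp$ and integrating out $W^\perp$ first. Conditionally on the $V$-path (equivalently on $W$), $\int_t^T(\cdots)\,dW_s^\perp$ is Gaussian with variance $RV_T^{(c)}:=\int_t^T(a\sqrt{V_s}+b/\sqrt{V_s})^2\,ds$ (the continuous part of realized variance), so the $W^\perp$-expectation produces $\exp\big(\tfrac12 u^2(1-\rho^2)RV_T^{(c)}\big)$. Combining with the $-l\,RV_T^{(c)}$ term and the remaining $\rho$-part $\rho u\int_t^T(\cdots)\,dW_s$, the whole problem collapses to computing
\[
\mathbb{E}\left[\exp\left(-\Big(l - \tfrac12 u^2(1-\rho^2)\Big)\int_t^T\!\Big(a\sqrt{V_s}+\tfrac{b}{\sqrt{V_s}}\Big)^2 ds + \rho u\int_t^T\!\Big(a\sqrt{V_s}+\tfrac{b}{\sqrt{V_s}}\Big)dW_s\right)\bigg|\,V_t\right].
\]
The stochastic integral $\int_t^T(a\sqrt{V_s}+b/\sqrt{V_s})\,dW_s$ I would rewrite pathwise using the SDE \eqref{eq2.2}: $\sigma\int\sqrt{V_s}\,dW_s = V_T - V_t - \kappa\theta\tau + \kappa\int V_s\,ds$, and $\sigma\int \frac1{\sqrt{V_s}}\,dW_s = \log V_T - \log V_t - (\kappa\theta-\tfrac{\sigma^2}{2})\int\frac{ds}{V_s} + \kappa\tau$ by Itô on $\log V$. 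Substituting these produces boundary terms $V_T$, $\log V_T$ (hence the power $V_t^{\cdots}$ and the $V_t$-dependence in the final exponential, after the Markov evaluation at the terminal point), an affine-in-$\tau$ drift correction (the source of the $\kappa^2\theta/\sigma^2$, $-ab$, $\mp\kappa\theta a\rho/\sigma$, $\rho b\kappa/\sigma$, $-2lab$ and $ab(1-\rho^2)u^2$ terms in the first exponential, once one also expands the cross term $2ab$ inside the square), and crucially two additive functionals of Bessel type: $\int_t^T V_s\,ds$ with coefficient involving $a^2$, and $\int_t^T \frac{ds}{V_s}$ with coefficient involving $b^2$.

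At that point the remaining expectation is exactly of the form handled by Grasselli's Proposition~1 (and Lemma 3.2 of Baldeaux–Badran \cite{BB2014}): the joint Laplace transform of $\big(V_T,\ \log V_T,\ \int_t^T V_s ds,\ \int_t^T \frac{ds}{V_s}\big)$ for a CIR process, which is known in closed form in terms of $\Gamma$ and $\,_1F_1$. I would match coefficients to read off the effective parameters: the coefficient of $\int V_s ds$ gives the discriminant $A_u$ and hence $K_u(t)$ and the $\sinh,\coth$ factors; the coefficient of $\int \frac{ds}{V_s}$ together with the exponent of $V_T$ (coming from the $\log V_T$ boundary term, i.e.\ $-bu\rho/\sigma$) gives the index $m_u$ of the confluent hypergeometric function and the argument of $\Gamma$; and $f_1,\dots,f_4$ are precisely the conditions under which each special-function factor and each Gaussian/CIR-Laplace integral converges — $f_1$ (equivalently $A_u\ge0$) for the $a^2$-functional, $f_3$ for the $b^2$-functional, $f_2$ for the $\Gamma$-argument and the $\,_1F_1$ series, and $f_4$ for positivity of $-ua\rho/\sigma + K_u(t)$, whose failure is what triggers the finite explosion time $t^*$ obtained by solving $-ua\rho/\sigma+K_u(t+t^*)=0$ for $\tau=t^*$.

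The main obstacle is the bookkeeping of step three: carefully substituting the Doléans expressions for the two stochastic integrals, collecting every $\tau$-linear term and every boundary term, and then aligning the resulting CIR expectation with the precise normalization of Grasselli's formula so that the parameters $A_u$, $m_u$, $K_u(t)$ and the arguments of $\Gamma$ and $\,_1F_1$ come out exactly as stated. The convergence/explosion analysis — verifying that $\mathcal{D}_{t,+\infty}$ is the set where all factors are simultaneously finite for every $\tau$, and that on the larger strip $\mathcal{D}_{t,T}$ the $\coth$ term develops a pole at $\tau=t^*$ — is then a matter of inspecting when $A_u\ge0$ fails, when $\sinh(\sqrt{A_u}\tau/2)$ or $K_u(t)-ua\rho/\sigma$ vanishes, and when the hypergeometric series diverges, which I would read directly off $f_1$–$f_4$.
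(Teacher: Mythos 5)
Your proposal follows essentially the same route as the paper: the jump contribution is factored out exactly as in Lemma \ref{lem4.2}, the orthogonal Brownian integral is integrated out as a conditionally centered Gaussian, the $dW$-integrals of $\sqrt{V}$ and $1/\sqrt{V}$ are replaced via the CIR SDE and It\^o on $\log V$ to leave boundary terms $V_T$, $\log V_T$ and the two additive functionals $\int V_s\,ds$, $\int ds/V_s$, and the resulting CIR expectation is evaluated with Grasselli's Lie-symmetry transform (Lemma \ref{lem4.3}), with $f_1$--$f_4$ and $t^*$ read off as the domain of validity of that transform. This matches the paper's proof in both decomposition and key lemmas.
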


\begin{lemma}\label{lem4.2}
Let $u \in \mathbb{C}$ and $l \in \mathbb{R}^+$. The distribution of $\xi$ is assumed to be normal with mean $\mu$ and variance $\eta^2$. $N$ is a Possion process at constant rate $\lambda$ and adapted to a filtration $(\mathscr{F}_t)_{\in[0,T]}$. Then
\begin{equation}
\mathbb{E}\left[ \exp\left({u\sum_{j=N_t+1}^{N_T}\xi_j-l\sum_{j=N_t+1}^{N_T}\xi_j^2}\right) \right]=\exp\left({\lambda ({T-t})({c-1})}\right),
\end{equation}
where
$$c=\frac{\exp\left({\frac{2u\mu-2l\mu^2+u^2\eta^2}{2+4l\eta^2}}\right)}{\sqrt{1+2l\eta^2}}.$$
\end{lemma}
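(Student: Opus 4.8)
The plan is to condition on the number of jumps in the interval $(t,T]$ and use the independence structure of the compound Poisson process. First I would observe that, conditionally on $N_T-N_t=n$, the jump sizes $\xi_{N_t+1},\dots,\xi_{N_T}$ are $n$ independent copies of a normal random variable with mean $\mu$ and variance $\eta^2$, independent of $n$ itself. Hence
\begin{align*}
\mathbb{E}\left[\exp\left(u\sum_{j=N_t+1}^{N_T}\xi_j-l\sum_{j=N_t+1}^{N_T}\xi_j^2\right)\right]
&=\sum_{n=0}^{\infty}\mathbb{Q}(N_T-N_t=n)\,\mathbb{E}\left[\exp\left(u\xi-l\xi^2\right)\right]^n,
\end{align*}
where $\xi\sim\mathcal{N}(\mu,\eta^2)$ and I have used that $N_T-N_t$ is Poisson with parameter $\lambda(T-t)$.

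The next step is the one Gaussian integral that does the real work: computing $c:=\mathbb{E}[\exp(u\xi-l\xi^2)]$ for $\xi\sim\mathcal{N}(\mu,\eta^2)$. I would write this as
$$c=\frac{1}{\sqrt{2\pi}\,\eta}\int_{\mathbb{R}}\exp\left(ux-lx^2-\frac{(x-\mu)^2}{2\eta^2}\right)dx,$$
collect the quadratic and linear terms in $x$ in the exponent, complete the square, and carry out the resulting Gaussian integral. The coefficient of $x^2$ is $-(l+\tfrac{1}{2\eta^2})=-\tfrac{1+2l\eta^2}{2\eta^2}$, which is strictly negative since $l\in\mathbb{R}^+$ and $\eta^2>0$, so the integral converges for every $u\in\mathbb{C}$; this is where the hypothesis $l>0$ is used and why no restriction on $u$ is needed. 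Completing the square produces the prefactor $1/\sqrt{1+2l\eta^2}$ from the Gaussian normalization and the exponential term $\exp\left(\frac{2u\mu-2l\mu^2+u^2\eta^2}{2+4l\eta^2}\right)$ from the value of the exponent at its critical point, giving exactly the stated $c$.

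Finally I would substitute $\mathbb{Q}(N_T-N_t=n)=e^{-\lambda(T-t)}\frac{(\lambda(T-t))^n}{n!}$ into the sum and recognize the series:
$$\sum_{n=0}^{\infty}e^{-\lambda(T-t)}\frac{(\lambda(T-t))^n}{n!}\,c^n
=e^{-\lambda(T-t)}e^{\lambda(T-t)c}=\exp\left(\lambda(T-t)(c-1)\right),$$
which is the claimed identity. I do not anticipate any genuine obstacle here: the only points requiring a little care are justifying the interchange of expectation and infinite sum (dominated/monotone convergence, or simply absolute convergence of the exponential series since $|c|<\infty$) and tracking the algebra in the completion of the square; everything else is the standard computation of the characteristic/Laplace functional of a compound Poisson process with Gaussian jumps.
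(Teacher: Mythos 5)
Your proposal is correct and follows essentially the same route as the paper: compute the single-jump expectation $c=\mathbb{E}[\exp(u\xi-l\xi^2)]$ by a Gaussian completion of the square, then apply the probability generating function of the Poisson increment $N_T-N_t$. The paper simply cites these two facts without detail, so your write-up is a fleshed-out version of the same argument (and your remark on absolute convergence for complex $u$ is a useful addition, since the paper states the generating-function identity only for real $d\ge 0$).
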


\begin{proof}
The result follows immediately from
$$\mathbb{E}\left[\exp({u\xi_j-l\xi_j^2})\right]=\frac{\exp\left({\frac{2u\mu-2l\mu^2+u^2\eta^2}{2+4l\eta^2}}\right)}{\sqrt{1+2l\eta^2}},$$
and for $d\ge0$
$$\mathbb{E}\left[d^{N_T-N_t}\right]=\exp({\lambda ({T-t})({d-1})}).$$
\end{proof}

Despite the simplicity of the financial framework, the 4/2 plus jumps model leads to highly non trivial issues: what is the exact time to allow for the relevent transforms required by the Fourier-Laplace pricing approach? Fortunately, the following Lemma first established by Grasselli \cite{G2014} solves the problem of computing an expectation that goes beyond the known results on squared Bessel processes. Furthermore, it is easy to get the conditional generalized characteristic function of CIR process by using this Lemma. 

\begin{lemma}\label{lem4.3}
Let $X^x=\left\{ X_t^x,t \ge 0  \right\}$ denote the solution of the \eqref{eq2.2} (CIR) SDE and $X_0=x >0$ with $\kappa,\theta,\sigma >0$ and $2\kappa\theta \ge \sigma^2$ (Feller condition). Consider $\epsilon, \nu, \alpha, \gamma \in \mathbb{R}$ such that
\begin{align}
\epsilon&>-\frac{\kappa^2}{2\sigma^2},\\
\nu&\ge -\frac{\left({\kappa\theta-\frac{\sigma^2}{2}}\right)^2}{2\sigma^2},\\
\alpha&<\frac{\kappa\theta+\frac{\sigma^2}{2}+\sqrt{\left({\kappa\theta-\frac{\sigma^2}{2}}\right)^2+2\sigma^2\nu}}{\sigma^2},\\
\gamma&\ge-\frac{\sqrt{\kappa^2+2\epsilon\sigma^2}+\kappa}{\sigma^2}.
\end{align}
The following transform for the CIR process is well defined for all $t\ge0$ and is given by 
\begin{align}
\phi(t,x;\alpha,\gamma,\epsilon,\nu)&=\mathbb{E}\left[ ({X_t^x})^{-\alpha}\exp\left({-\gamma X_t^x-\epsilon \int_0^t X_t^x ds-\nu\int_0^t\frac{ds}{X_t^x}}\right)\right]\notag\\
&=\left({\frac{\beta({t,x})}{2}}\right)^{m+1}x^{-\frac{\kappa\theta}{\sigma^2}}({\gamma+K({t})})^{-\left({\frac{1}{2}+\frac{m}{2}-\alpha+\frac{\kappa\theta}{\sigma^2}}\right)}\notag\\
&\quad\times e^{\frac{1}{\sigma^2}\left({\kappa^2\theta t-\sqrt{A}x\coth\left({\frac{\sqrt{A}t}{2}}\right)+\kappa x}\right)}\frac{\Gamma\left({\frac{1}{2}+\frac{m}{2}-\alpha+\frac{\kappa\theta}{\sigma^2}}\right)}{\Gamma({m+1})}\notag\\
&\quad\times _{1}F_{1}\left({\frac{1}{2}+\frac{m}{2}-\alpha+\frac{\kappa\theta}{\sigma^2},m+1,\frac{\beta({t,x})^2}{4({\gamma+K({t})})}}\right),
\end{align} 
with
\begin{align}
m&=\frac{2}{\sigma^2}\sqrt{\left({\kappa\theta-\frac{\sigma^2}{2}}\right)^2+2\sigma^2\nu},\\
A&=\kappa^2+2\sigma^2\epsilon,\\
\beta({t,x})&=\frac{\sqrt{Ax}}{\frac{\sigma^2}{2}\sinh\left({\frac{\sqrt{A}t}{2}}\right)},\\
K({t})&=\frac{1}{\sigma^2}\left({\sqrt{A}\coth\left({\frac{\sqrt{A}t}{2}}\right)+\kappa}\right).
\end{align}

If
\begin{equation}
\gamma<-\frac{\sqrt{\kappa^2+2\epsilon\sigma^2}+\kappa}{\sigma^2},
\end{equation}
then the transform is well defined for all $t<t^*$, with
\begin{equation}
t^*=\frac{1}{\sqrt{A}}\log\left({1-\frac{2\sqrt{A}}{\kappa+\sigma^2\gamma+\sqrt{A}}}\right).
\end{equation}
\end{lemma}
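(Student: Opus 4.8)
The plan is to identify $\phi(t,x;\alpha,\gamma,\epsilon,\nu)$ with the solution of a parabolic Cauchy problem via the Feynman--Kac formula and then solve that problem in closed form, following Grasselli \cite{G2014}, on whose Lie-symmetry computation we rely for the analytic core. Writing $\mathcal{L}=\tfrac{\sigma^2 x}{2}\partial_{xx}+\kappa(\theta-x)\partial_x$ for the CIR generator, the function $\phi$ should solve
\begin{equation*}
\partial_t\phi=\mathcal{L}\phi-\left(\epsilon x+\frac{\nu}{x}\right)\phi,\qquad \phi(0,x)=x^{-\alpha}e^{-\gamma x},
\end{equation*}
and there are two things to establish: that the expectation defining $\phi$ is finite --- this is what the Feller condition $2\kappa\theta\ge\sigma^2$ and the four bounds on $\epsilon,\nu,\alpha,\gamma$ buy us, and it is exactly what pins down the convergence conditions and the explosion time $t^*$ --- and that it equals the displayed expression.

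The first reduction is to remove the functional $\epsilon\int_0^t X_s\,ds$ using the affine property of CIR: a Girsanov change of measure with constant market price of risk $\lambda=(\sqrt{A}-\kappa)/\sigma$, $A=\kappa^2+2\sigma^2\epsilon$, turns $X$ into a CIR process with mean-reversion speed $\sqrt{A}$ and unchanged $\kappa\theta$, at the cost of deterministic and $e^{cX_t}$ prefactors coming from the Radon--Nikodym density. This accounts for every occurrence of $\sqrt{A}$, and for $K(t)$ and $\beta(t,x)$, in the final formula; the hypothesis $\epsilon>-\kappa^2/(2\sigma^2)$ is precisely what keeps $A>0$.

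The genuinely non-affine ingredient is $\exp\left(-\nu\int_0^t ds/X_s\right)$, and dealing with it is the step we expect to be the main obstacle. The clean route is to realize CIR as a deterministically time- and scale-changed squared Bessel (squared radial Ornstein--Uhlenbeck) process; under this transformation $\int_0^t ds/X_s$ becomes, up to constants, the intrinsic clock of the Bessel process, so that adding $\nu/x$ to the potential has the sole effect of shifting the Bessel index from $\tfrac{2\kappa\theta}{\sigma^2}-1$ to the value $m=\tfrac{2}{\sigma^2}\sqrt{(\kappa\theta-\tfrac{\sigma^2}{2})^2+2\sigma^2\nu}$, which is real exactly under the stated bound on $\nu$ and reduces to $\tfrac{2\kappa\theta}{\sigma^2}-1$ when $\nu=0$. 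Equivalently --- and this is the route taken in Grasselli \cite{G2014} --- one computes the Lie point symmetry group of the PDE above and transports the known heat-type fundamental solution onto the fundamental solution of our equation. Either way one obtains a transition kernel of non-central chi-square type carrying a modified Bessel function $I_m$, and since $\phi(t,x)$ is its pairing against the datum $y^{-\alpha}e^{-\gamma y}$, the evaluation collapses to the classical integral $\int_0^\infty y^{s-1}e^{-py}I_m(2\sqrt{qy})\,dy=\Gamma\left(s+\tfrac m2\right)q^{m/2}p^{-(s+m/2)}\Gamma(m+1)^{-1}\,{}_1F_1\left(s+\tfrac m2;m+1;q/p\right)$, which reproduces the $\Gamma$-ratio and ${}_1F_1$ in exactly the displayed combination, with $p=\gamma+K(t)$ and $s=\tfrac12-\alpha+\tfrac{\kappa\theta}{\sigma^2}$; matching the remaining prefactors $(\beta/2)^{m+1}$, $x^{-\kappa\theta/\sigma^2}$, $(\gamma+K(t))^{-(1/2+m/2-\alpha+\kappa\theta/\sigma^2)}$ and the exponential term is then bookkeeping.

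Finally, one reads off the domain of validity. The condition $\alpha<\tfrac{\kappa\theta}{\sigma^2}+\tfrac12+\tfrac m2$, equivalently $s+\tfrac m2>0$, is what makes the integral above converge at the origin and the $\Gamma$-argument positive; the condition $\gamma\ge-(\sqrt{\kappa^2+2\epsilon\sigma^2}+\kappa)/\sigma^2$ guarantees $\gamma+K(t)>0$ for all $t\ge0$, because $K(t)=\tfrac1{\sigma^2}\left(\sqrt A\coth(\tfrac{\sqrt A\,t}{2})+\kappa\right)$ decreases from $+\infty$ at $t=0^+$ to $(\sqrt A+\kappa)/\sigma^2$ as $t\to\infty$; and when $\gamma$ drops below that threshold, $\gamma+K(t)$ vanishes at the unique $t^*$ solving $\coth(\tfrac{\sqrt A\,t^*}{2})=-(\sigma^2\gamma+\kappa)/\sqrt A$, which rearranges to the stated $t^*=\tfrac1{\sqrt A}\log\left(1-\tfrac{2\sqrt A}{\kappa+\sigma^2\gamma+\sqrt A}\right)$, beyond which the transform is infinite. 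To make this rigorous rather than formal, the cleanest finish is a verification argument: with the closed form $F(t,x)$ in hand one checks that $F$ solves the PDE and the initial condition, that the process $M_s:=F(t-s,X_s)\exp\left(-\int_0^s(\epsilon X_u+\nu/X_u)\,du\right)$ for $s\in[0,t]$ is a local martingale, and that the stated parameter bounds supply the integrability needed to conclude that $M$ is a true martingale, so that $F(t,x)=M_0=\mathbb{E}[M_t]=\phi(t,x)$; this simultaneously certifies the formula and the convergence set.
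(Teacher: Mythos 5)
Your proposal is correct, and it is worth noting how it relates to what the paper actually does: the paper's own ``proof'' of Lemma \ref{lem4.3} is essentially a citation of Theorem 1 in Grasselli \cite{G2014}, together with a three-sentence sketch of the Lie-symmetry machinery (find the invariant surface for the second prolongation, invert a Laplace transform to get a fundamental solution, invoke Craddock--Lennox to identify it with a transition density). You acknowledge that route but primarily develop a more self-contained probabilistic derivation: a Girsanov change with kernel proportional to $\sqrt{X_s}$ to absorb $\epsilon\int_0^t X_s\,ds$ and replace the mean-reversion speed by $\sqrt{A}$ (keeping $\kappa\theta$ fixed), a time-and-scale change to a squared Bessel process under which $\exp(-\nu\int_0^t ds/X_s)$ is handled by the classical index-shift absolute-continuity relation (giving exactly the stated $m$, which collapses to $\tfrac{2\kappa\theta}{\sigma^2}-1$ at $\nu=0$ under Feller), and finally the Hankel--Laplace integral of $y^{s-1}e^{-py}I_m(2\sqrt{qy})$ producing the $\Gamma$-ratio and ${}_1F_1$. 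Your reading of the parameter restrictions is also the right one: $\epsilon>-\kappa^2/(2\sigma^2)$ keeps $A>0$, the bound on $\alpha$ is convergence of the kernel pairing at the origin (positivity of $s+\tfrac m2$), the bound on $\gamma$ is $\gamma+K(t)>0$ for all $t$ since $K$ decreases to $(\sqrt A+\kappa)/\sigma^2$, and your rearrangement of $\gamma+K(t^*)=0$ reproduces the stated $t^*$. The closing verification/martingale argument is a genuine addition --- neither the paper nor its sketch addresses why the Feynman--Kac identification is rigorous --- and is the cleanest way to certify both the formula and the convergence set simultaneously. In short: same analytic core as Grasselli, reached by classical probabilistic tools rather than Lie symmetries, with more attention to rigor; both routes are valid, the paper's being shorter only because it outsources everything to the cited theorem.
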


\begin{remark}\label{rem4.1}
\textit{Special case}:
when $\epsilon=\nu=\gamma=0$, we have the (non-integral) moments of the process for $\alpha < \frac{2\kappa\theta}{\sigma^2}$:
\begin{align*}
\mathbb{E}\left[{X_t^{-\alpha}}\right]=&\left({\frac{\kappa}{\sigma^2}}\right)^{\alpha}\left({\sinh\left({\frac{\kappa t}{2}}\right)}\right)^{-\frac{2\kappa\theta}{\sigma^2}}\exp\left(\frac{\kappa}{\sigma^2}\left({\kappa\theta t+x-x\coth{\left({\frac{\kappa t}{2}}\right)}}\right)\right)\\
&\times\left({1+\coth\left({{\frac{\kappa t}{2}}}\right)}\right)^{\alpha-\frac{2\kappa\theta}{\sigma^2}}\frac{\Gamma\left({\frac{2\kappa\theta}{\sigma^2}-\alpha}\right)}{\Gamma\left({\frac{2\kappa\theta}{\sigma^2}}\right)}\ _{1}F_1\left({\frac{2\kappa\theta}{\sigma^2}-\alpha,\frac{2\kappa\theta}{\sigma^2},\frac{2\kappa x}{\sigma^2({e^{\kappa t}-1})}}\right).
\end{align*}
\end{remark}

\begin{proof}[Proof of Lemma \ref{lem4.3}]
The result follows immediately from Theorem 1 in Grasselli \cite{G2014}, whose proof mainly relies on Lie's classical symmetry method as in Bluman and Kumei \cite{BK2013} and Olver \cite{O2000}.We first note that by standard arguments the expectation is related to the solution of the following symmetrical PDE:
\begin{equation}
u_t=\frac{1}{2}\sigma^2xu_{xx}+f(x)u_x-\left({\frac{\nu}{x}+\epsilon x}\right)u,\quad \epsilon>0,\ \nu>0,
\end{equation}
where $f(x)=\kappa\theta-\kappa x$. The key result in order to find the Lie groups admitted by the PDE states that one should find the invariant surface for the second prolongation of group acting on the $({x;t;u})$-space where the solutions of the PDE lie. Once such equations are solved, one can find the corresponding Lie group admitted by the PDE and thus find a fundamental solution of the PDE by inverting a Laplace transform. Finally, Craddock and Lennox \cite{CL2009} showed the condition under which the fundamental solution is also a transition probability density for the underlying stochastic process. For more details, see Grasselli \cite{G2014}.
\end{proof}

\begin{proof}[Proof of Theorem \ref{thm4.1}]
From \eqref{eq2.2} we obtain
\begin{equation}
V_T-V_t=\kappa \theta \tau-\kappa\int_t^T V_s ds+\sigma\int_t^T\sqrt{V_s}dW_s,
\end{equation}
and
\begin{equation}
\log \left({\frac{V_T}{V_t}}\right)=\sigma\int_t^T\frac{1}{\sqrt{V_s}}dW_s-\kappa\tau+\left({\kappa\theta-\frac{\sigma^2}{2}}\right)\int_t^T\frac{1}{V_s}ds.
\end{equation}
Now,
\begin{align*}
Y_{t,T}&=u\log \left({\frac{\tilde{S}_T}{\tilde{S}_t}}\right)-l\int_t^T\left({a\sqrt{V_s}+\frac{b}{\sqrt{V_s}}}\right)^2ds\\
&=u({r-\lambda\tilde{\mu}})\tau-\left({\frac{u}{2}+l}\right)\int_t^T\left({a\sqrt{V_s}+\frac{b}{\sqrt{V_s}}}\right)^2ds\\
&\quad +u\rho\int_t^T\left({a\sqrt{V_s}+\frac{b}{\sqrt{V_s}}}\right)dW_s+u\sqrt{1-\rho^2}\int_t^T\left({a\sqrt{V_s}+\frac{b}{\sqrt{V_s}}}\right)dW_s^{\perp}\\
&=\left[u\left({r-\lambda\tilde{\mu}-ab-\frac{\kappa\theta a\rho}{\sigma}+\frac{\rho b \kappa}{\sigma}}\right)\tau-2lab\tau-\frac{au\rho}{\sigma}V_t-\frac{u\rho b}{\sigma}\log V_t \right]\\
&\quad +\left[ \frac{\kappa au\rho}{\sigma}-a^2\left({\frac{u}{2}+l}\right)\right]\int_t^T V_s ds\\
&\quad +\left[\frac{bu\rho\left({\sigma^2-2\kappa \theta}\right)}{2\sigma}-b^2\left({\frac{u}{2}+l}\right)\right]\int_t^T \frac{1}{V_s} ds\\
&\quad +\frac{au\rho}{\sigma}V_T+\frac{bu\rho}{\sigma}\log V_T\\
&\quad +u\sqrt{1-\rho^2}\int_t^T\left({a\sqrt{V_s}+\frac{b}{\sqrt{V_s}}}\right)dW_s^{\perp}
\end{align*}

Let $u\in \mathbb{C}$, $l \in \mathbb{R}^+$ and compute the joint Fourier-Laplace transform of $X_T$ and $(RV_T-RV_t)$. We have
\begin{align}
&\mathbb{E}\left[ \exp({uX_T-l(RV_T-RV_t)}) \bigg| V_t,X_t\right]\notag\\
&\quad=\mathbb{E}\left[ \exp({uX_t})\exp({u({X_T-X_t})-l(RV_T-RV_t)}) \bigg| V_t,X_t\right]\notag\\
&\quad=e^{uX_t}\mathbb{E}\left[ \exp\left({u\log \left({\frac{S_T}{S_t}}\right)-l\int_t^T\left({a\sqrt{V_s}+\frac{b}{\sqrt{V_s}}}\right)^2ds-l\sum_{j=N_t+1}^{N_T}\xi_j^2}\right) \Bigg| V_t,X_t\right]\notag\\
&\quad=e^{uX_t}\mathbb{E}\left[ \exp\left({u\log \left({\frac{\tilde{S}_T}{\tilde{S}_t}}\right)-l\int_t^T\left({a\sqrt{V_s}+\frac{b}{\sqrt{V_s}}}\right)^2ds}\right) \bigg| V_t,X_t\right]\notag\\
&\quad\quad\times\mathbb{E}\left[ \exp\left({u\sum_{j=N_t+1}^{N_T}\xi_j-l\sum_{j=N_t+1}^{N_T}\xi_j^2}\right) \right]\notag\\
&\quad= e^{uX_t}\mathbb{E}\left[ \exp({Y_{t,T}})\Bigg| V_t,X_t\right]\mathbb{E}\left[ \exp\left({u\sum_{j=N_t+1}^{N_T}\xi_j-l\sum_{j=N_t+1}^{N_T}\xi_j^2}\right) \right]\notag\\
&\quad= e^{uX_t} \exp\left({-\frac{au\rho}{\sigma}V_t-\frac{bu\rho}{\sigma}\log V_t}\right)\notag\\
&\quad\quad\times\exp\left(u\left({r-\lambda\tilde{\mu}-ab-\frac{\kappa\theta a\rho}{\sigma}+\frac{\rho b \kappa}{\sigma}}\right)\tau-2lab\tau+ab\tau({1-\rho^2})u^2 \right)\notag\\
&\quad\quad\times\mathbb{E}\left[ \exp\left({u\sum_{j=N_t+1}^{N_T}\xi_j-l\sum_{j=N_t+1}^{N_T}\xi_j^2}\right) \right]\notag\\
&\quad\quad\times\mathbb{E}\left[V_T^{-\alpha}\exp\left(-\gamma V_T-\epsilon\int_t^T V_sds-\nu\int_t^T\frac{1}{V_s}ds\right) \right]
\end{align}
where
\begin{align}
&\alpha=-\frac{ub\rho}{\sigma},\\
&\gamma=-\frac{ua\rho}{\sigma},\\
&\epsilon=-\frac{\kappa au\rho}{\sigma}+a^2\left({\frac{u}{2}+l}\right)-\frac{1}{2}u^2({1-\rho^2})a^2,\\
&\nu=-\frac{bu\rho({\sigma^2-2\kappa \theta})}{2\sigma}+b^2\left({\frac{u}{2}+l}\right)-\frac{1}{2}u^2({1-\rho^2})b^2.
\end{align}

Here we use the fact that
$$u\sqrt{1-\rho^2}\int_t^T\left({a\sqrt{V_s}+\frac{b}{\sqrt{V_s}}}\right)dW_s^{\perp}$$
conditional on the path $(V_s,\ t \le s \le T)$ is centered normal. We then prove Theorem \ref{thm4.1} after some manipulations by applying Lemmas \ref{lem4.2} and \ref{lem4.3}.
\end{proof}

According to Theorem \ref{thm4.1}, we derive a closed-form solution for this joint Fourier-Laplace transform so that equity and realized-variance derivatives can be priced. In particular, for equity derivatives, using the methodology presented in Carr and Madan \cite{CM1999} and Lewis \cite{L2000}, it is possible to price European options through a Fourier inversion to the characteristic function of $X_T=\log S_T$ which is exponentially affine in $X_T$. The Fourier-cosine expansion methodology in Fang and Oosterlee \cite{FO2008} is more suitable to deal with exponentially affine type. 
On the other hand, for realized-variance derivatives, the payoff of call options on realized-variance with strike $K$ and maturity $T$ is defined as $$\left(\frac{1}{T}RV_T-K\right)^+$$. Carr and Madan \cite{CM1999}, and Carr, Geman, Madan, and et al. \cite{CGMY2005} showed that the Fourier transform of the call option on realized-variance can be easily derived in closed-form expression as a function of the log strike $k=\log{K}$. By using a sequence of strikes simultaneously, FFT method as fast numerical Laplace inversion algorithms presented by Cooley and Tukey \cite{CT1965} or the more robust control variate method developed in Drimus \cite{D2012} can be put to use. 

\section{Quasi-Closed-Form Solution for Future and Call Option Price Under the 4/2 Model}

Under a continuity assumption that the price process is replicated with the log contract (see Whaley \cite{W1993}, Demeterfi, Derman, Kamal, and et al. \cite{DDK1999}, and  Carr and Wu \cite{CW2006}), the expected quadratic variation of the log returns over the next 30 days can interpret the squared VIX index. In this section, we shall extend a more general VIX future and European call options pricing formula (prices of put options followed by put-call parity). Our result not only extends the result of Zhang and Zhu \cite{ZZ2006} by turning off the jumps and setting the jump intensity $\lambda=0$ and parameter $b=0$, but also extends VIX derivatives pricing formula of the Baldeaux and Badran \cite{BB2014} by setting $a=0$. Recall the VIX formula in CBOE\cite{CBOE2003}, the squared VIX index in Eq. (1.1) is an approximation:
\begin{equation}\label{eq5.1}
\mathrm{VIX}_t^2\approx-\frac{2}{\tau}\mathbb{E}\left[{\log \left({\frac{S_{t+\tau}}{S_te^{r\tau}}}\right)\bigg|\mathcal{F}_t}\right]\times100^2,
\end{equation}
with $\tau=\frac{30}{365}$ and $S_te^{r\tau}$ being forward price of SPX observed at time $t$ with $t+\tau$ as maturity. The following derivation of VIX options pricing formula, which is an extension of Proposition 3.4 in Baldeaux and Badran \cite{BB2014}, also extends Proposition 1 in Zhang and Zhu \cite{ZZ2006}.   

\begin{lemma}\label{lem5.1}
Let $S$, $V$, and $\mathrm{VIX}^2$ be defined by Eqs. \eqref{eq2.1}, \eqref{eq2.2} and \eqref{eq5.1}, then
$$\mathrm{VIX}_t^2=100^2\times \left({H_1+\int_0^{\tau}H_2 du}\right),$$
where
\begin{align}
H_1=&2\lambda({\tilde{\mu}-\mu})+2ab+\frac{a^2}{\tau}\left({\theta\tau+\frac{\theta-x}{\kappa}\left({e^{-\kappa\tau}-1}\right)}\right),\\
H_2=&\frac{b^2\kappa}{\tau\sigma^2}\frac{\Gamma\left({\frac{2\kappa\theta}{\sigma^2}-1}\right)}{\Gamma\left({\frac{2\kappa\theta}{\sigma^2}}\right)}\left({\sinh{\left({\frac{ku}{2}}\right)}}\right)^{-\frac{2\kappa\theta}{\sigma^2}}\exp\left({\frac{\kappa}{\sigma^2}\left({\kappa\theta u+x-x\coth\left({\frac{\kappa u}{2}}\right)}\right)}\right)\notag\\
&\times\left({1+\coth\left({\frac{\kappa u}{2}}\right)}\right)^{1-\frac{2\kappa\theta}{\sigma^2}}\ _{1}F_{1}\left({\frac{2\kappa\theta}{\sigma^2}-1,\frac{2\kappa\theta}{\sigma^2},\frac{2\kappa x}{\sigma^2({e^{\kappa u}-1})}}\right).
\end{align}
\end{lemma}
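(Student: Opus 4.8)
The plan is to evaluate the conditional expectation in the CBOE approximation \eqref{eq5.1} directly, reducing it to the first moment and the first negative moment of the CIR process \eqref{eq2.2}. First I would apply It\^o's formula in jump--diffusion form to $\log S_t$ using \eqref{eq2.1}: a jump of relative size $e^{\xi}-1$ in $S$ contributes an increment $\xi$ to $\log S$, so integrating the drift $r-\lambda\tilde\mu-\tfrac12(a\sqrt{V_s}+b/\sqrt{V_s})^2$ over $[t,t+\tau]$ and subtracting $r\tau$ gives
\[
\log\!\left(\frac{S_{t+\tau}}{S_t e^{r\tau}}\right) = -\lambda\tilde{\mu}\tau - \frac{1}{2}\int_t^{t+\tau}\!\!\left(a\sqrt{V_s}+\frac{b}{\sqrt{V_s}}\right)^{\!2}\! ds + \int_t^{t+\tau}\!\!\left(a\sqrt{V_s}+\frac{b}{\sqrt{V_s}}\right)\! dZ_s + \sum_{j=N_t+1}^{N_{t+\tau}}\xi_j .
\]
Taking $\mathbb{E}[\,\cdot\mid\mathcal{F}_t]$, the Brownian integral is a martingale with zero mean (its quadratic-variation expectation $\mathbb{E}\int_t^{t+\tau}(a\sqrt{V_s}+b/\sqrt{V_s})^2\,ds$ is finite under the Feller condition, by the two CIR moment estimates used below), while by Wald's identity for the compound-Poisson increment $\mathbb{E}[\sum_{j=N_t+1}^{N_{t+\tau}}\xi_j\mid\mathcal{F}_t]=\lambda\mu\tau$ (equivalently, differentiate the $l=0$ formula of Lemma \ref{lem4.2} in $u$ at $u=0$). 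Hence
\[
\mathbb{E}\!\left[\log\!\left(\frac{S_{t+\tau}}{S_t e^{r\tau}}\right)\!\Big|\mathcal{F}_t\right] = (\lambda\mu-\lambda\tilde\mu)\tau - \frac12\,\mathbb{E}\!\left[\int_t^{t+\tau}\!\!\Big(a^2V_s+2ab+\frac{b^2}{V_s}\Big)ds\,\Big|\,V_t=x\right].
\]

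Next I would interchange expectation with the finite time integral (Tonelli, nonnegative integrand) and handle the three terms separately. The $2ab$ term integrates to $2ab\tau$. For the $a^2$ term I use the classical CIR conditional mean $\mathbb{E}[V_s\mid V_t=x]=\theta+(x-\theta)e^{-\kappa(s-t)}$; integrating over $[t,t+\tau]$ yields $a^2\big(\theta\tau+\tfrac{\theta-x}{\kappa}(e^{-\kappa\tau}-1)\big)$. For the $b^2$ term I invoke the special case of Lemma \ref{lem4.3} recorded in Remark \ref{rem4.1} with $\epsilon=\nu=\gamma=0$ and $\alpha=1$, which is admissible because the Feller condition forces $1<2\kappa\theta/\sigma^2$, guaranteeing finiteness of $\mathbb{E}[1/V_s]$; since this expression has no elementary antiderivative in $s$, it must stay under the integral. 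Substituting $u=s-t$, the $b^2$ contribution to $\tfrac1\tau\mathbb{E}[\int\cdots]$ is $\tfrac{b^2}{\tau}\int_0^\tau\mathbb{E}[V_u^{-1}\mid V_0=x]\,du$, and pulling the constant $\kappa/\sigma^2$ out of the Remark \ref{rem4.1} expression for $\mathbb{E}[V_u^{-1}]$ exhibits the integrand as precisely $H_2$.

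Finally, multiplying the assembled conditional expectation by $-\tfrac{2}{\tau}\times100^2$ as in \eqref{eq5.1}: the constant and $a^2$ pieces combine into $H_1 = 2\lambda(\tilde\mu-\mu)+2ab+\tfrac{a^2}{\tau}\big(\theta\tau+\tfrac{\theta-x}{\kappa}(e^{-\kappa\tau}-1)\big)$, and the $b^2$ piece becomes $\int_0^\tau H_2\,du$, which is the stated identity. I expect the only substantive points to be the two flagged above --- the vanishing of the conditional mean of the Brownian stochastic integral and the finiteness of the negative CIR moment, both secured by the Feller condition $2\kappa\theta\ge\sigma^2$ --- while the rest is routine bookkeeping of rewriting the Remark \ref{rem4.1} formula in the normalization used in the statement (reading the ``$\sinh(ku/2)$'' there as $\sinh(\kappa u/2)$). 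One should also record, consistently with \eqref{eq5.1}, that the equality holds modulo the CBOE log-contract approximation rather than as an exact replication identity.
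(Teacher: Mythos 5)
Your proposal is correct and follows essentially the same route as the paper: the paper also reduces $\mathrm{VIX}_t^2$ to $2\lambda(\tilde\mu-\mu)$ plus $\frac{1}{\tau}\mathbb{E}[\int_t^{t+\tau}(a\sqrt{V_s}+b/\sqrt{V_s})^2ds\mid V_t=x]$ (phrased there as $-\partial_l$ of the Laplace transform at $l=0$) and then evaluates the three terms via the CIR conditional mean and Remark \ref{rem4.1} with $\alpha=1$. Your version merely fills in the details the paper compresses into ``after some manipulations,'' including the useful observations that the Feller condition guarantees $2\kappa\theta/\sigma^2>1$ (so the negative moment is finite) and that ``$\sinh(ku/2)$'' should read $\sinh(\kappa u/2)$.
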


\begin{proof}
It follows that
$$\mathrm{VIX}_t^2=100^2\times\left({\frac{\mathrm{g}({V_t,\tau})}{\tau}+2\lambda({\tilde{\mu}-\mu})}\right),\ t\ge0,$$
where
\begin{align*}
\mathrm{g}({x,\tau})&=-\frac{\partial}{\partial l}\mathbb{E}\left[{\exp\left({-l\int_t^{t+\tau}\left({a\sqrt{V_s}+\frac{b}{\sqrt{V_s}}}\right)^2 ds\bigg|V_t=x}\right)}\right]\Bigg|_{l=0}.
\end{align*}
We then get the result after some manipulations by applying Remark \ref{rem4.1} with $\alpha=1$ and CIR property.
\end{proof}

Lemma \ref{lem5.1} will help us obtain the distribution of $\mathrm{VIX}_t^2$ if we know the distribution of $V_t$ for $t \ge0$. In other words, we can get the pricing formula of VIX future and option if the problem of finding the transition density function for the variance process is solved. In the risk-neutral measure, Cox, Ingersoll, and Ross \cite{CIR1985} proved that the transitional probability density function (TPDF) of the instantaneous variance in \eqref{eq2.2} can be presented as
\begin{equation}
f^{Q}_{V_T | V_t}({y})=\frac{e^{\kappa\left(T-t\right)}}{2c}\left({\frac{ye^{\kappa(T-t)}}{V_t}}\right)^{q/2}\exp\left({-\frac{V_t+ye^{\kappa(T-t)}}{2c}}\right)I_q\left({\frac{\sqrt{V_t\cdot ye^{\kappa(T-t)}}}{c}}\right)\mathds{1}_{\left\{y \ge 0\right\}},
\end{equation}
where $c=\frac{\sigma^2({e^{\kappa(T-t)}-1})}{4\kappa}$, $q=\frac{2\kappa\theta}{\sigma^2}-1$, and $I_q({\cdot})$ is the modified Bessel function of the first kind of order $q$. The noncentral Chi-square, $\chi^2\left({2q+2,\frac{V_t}{c};\frac{ye^{\kappa(T-t)}}{c}}\right)$, with ${2q+2}$ degrees of freedom and parameter of noncentrality $\frac{V_t}{c}$, is the distribution function. Accorrding to Zhang and Zhu \cite{ZZ2006}, we extend the future price at time $t$ and being at maturity data $T$ directly as
\begin{align}\label{5.3}
\text{F}[{\mathrm{VIX}_t,t,T}]=&e^{-r({T-t})}\mathbb{E}\left[{\mathrm{VIX}_T\big| \mathcal{F}_t}\right]\notag\\   
=&e^{-r({T-t})}\int_0^{\infty}\sqrt{100^2\times\left({\frac{\mathrm{g}({y,\tau})}{\tau}+2\lambda({\tilde{\mu}-\mu})}\right)}\  f^{Q}_{V_T | V_t}({y})dy\notag\\
=&e^{-r({T-t})}\int_0^\infty100\ \sqrt{H_1+\int_0^\tau H_2du}\  f^{Q}_{V_T | V_t}({y})dy,
\end{align}
and the price of European call option with $K$ as exercise price equals
\begin{align}\label{5.4}
\text{C}[{\mathrm{VIX}_t,K,t,T}]=&e^{-r({T-t})}\mathbb{E}\left[{\left({\mathrm{VIX}_T-K}\right)^+\big| \mathcal{F}_t}\right]\notag\\
=&e^{-r({T-t})}\int_0^{\infty}\left({\sqrt{100^2\times\left({\frac{\mathrm{g}({y,\tau})}{\tau}+2\lambda({\tilde{\mu}-\mu})}\right)}-K}\right)^+\  f^{Q}_{V_T | V_t}({y})dy\notag\\
=&e^{-r({T-t})}\int_0^\infty\left({100\ \sqrt{H_1+\int_0^\tau H_2du}-K}\right)^+\cdot f^{Q}_{V_T | V_t}({y})dy
\end{align}

\section{Data}
VIX was first introduced by CBOE in 1993 and was used to measure the market's expectation of 30-days volatility. Until 2003 the CBOE had used an updated methodology to calculate this index. The historical ``new'' VIX time series since 2003 is shown in Figure 1. All data used in our analysis can be downloaded from the Website of CBOE.

\begin{figure}[htpb]
\caption{Plot of the VIX index (12/01/2003--12/31/2014) }
\label{Figure 1}
\includegraphics[width=0.8\textwidth]{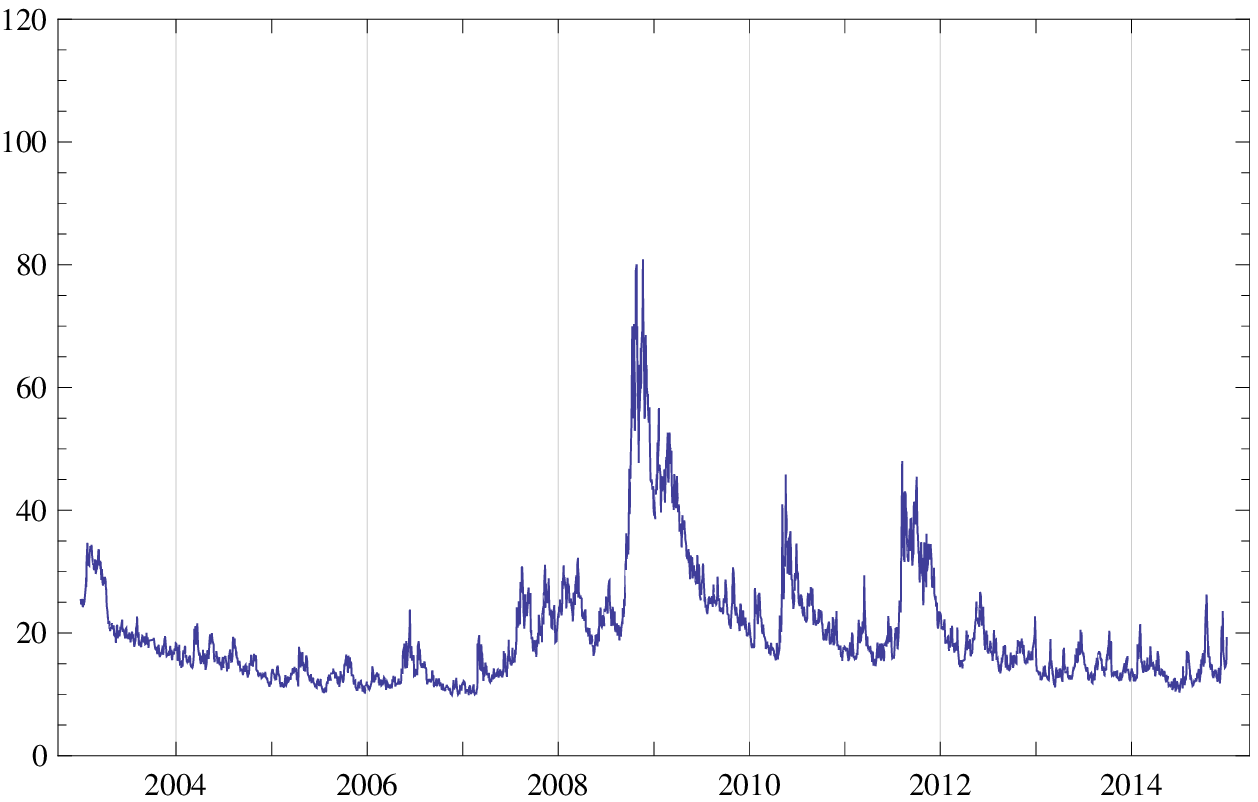}
\end{figure}

\begin{table}[h]  
\caption{VIX statistics}
\label{Table 1}
\centering
\begin{tabular}{p{3cm}l}
\hline
Symbol & Unitis\\
\noalign{\global\arrayrulewidth1pt}\hline\noalign{\global\arrayrulewidth0.4pt}\\
Mean & 19.8135\\
Variance & 87.6261\\
Minimum & 9.89\\
Maximum & 80.86\\
\hline
\end{tabular}
\end{table}

We can learn from Figure \ref{Figure 1} that VIX has a wavelike motion around a mean of approximately 20\% and tends to stay within a relatively narrow range of values suggesting a mean-reverting nature. Apparently, it can be seen that how the financial crisis starting in 2008 has been followed by a significant increase in volatility of the VIX index and served to justify its classification as the ``fear gauge". Table \ref{Table 1} shows some standard statistics for VIX data.

Note that $\mathrm{VIX}^{Quoted}$ and $\mathrm{VIX}^{Model}$ are different. In the following analysis, we refer to their relationship as 
$\left(\frac{\mathrm{VIX}^{Quoted}}{100}\right)^2=\mathrm{VIX}^{Model}.$

The option prices used in this paper are the VIX options traded in the CBOE. We employ the delayed market quotes on March 13, 2014 as the in-sample data to calibrate the risk-neutral parameters, with the underlying price 16.22, and those on March 14, 2014 are used for the out-of-sample test, with the underlying price 17.82. Note that the closing hour of the options and the VIX are the same, thus there is no nonsynchronous issue here. The whole data are available on the CBOE. To ensure sufficient liquidity and alleviate the influences of price discreetness during the valuation, we preclude the option quotes that are lower than $\$1.5$ in the sample data. $5\%$ is chosen to be the risk free interest rate.

\begin{table}
\caption{\\Description of VIX option data. The reported numbers are respectively the average option price and the numbers of observations, which are shown in the parenthesis, for the overall sample and each moneyness category on March 13, 2014 and March 14, 2014. $S$ denotes the VIX and $K$ is the exercise price of the option contract. OTM, ATM, ITM denote Out-of-the-Money, At-the-Money, In-the-Money options, respectively.}
\label{Table 1.5}
\centering
\begin{tabular*}{\textwidth}{@{\extracolsep{\fill}}lp{0.9em}lp{0.9em}lp{0.9em}lp{0.9em}l}
\hline
Date & &  Total & & \multicolumn{5}{c}{Moneyness $S/K$}\\ 
\cline{5-9}
& & & & OTM & &ATM & &ITM \\
& & & & $\le-0.1$ & & $(-0.1,0.1)$ & & $\ge0.1$ \\
\hline
March 13, 2014 & & \$3.21 & & $\$2.06$ & & $\$2.82$ & & $\$4.63$ \\
                          & &  (53)    & &      (18)   & &    (16)      & &   (19)\\
March 14, 2014 & & \$3.26 & & $\$1.86$ & & $\$2.44$ & & $\$4.24$ \\
                          & &  (57)    & &      (13)   & &    (14)      & &   (30)\\
\hline
\end{tabular*}
\end{table}

Finally the option sample contains 53 call options on March 13, 2014 and 57 call options on March 14, 2014, respectively, with available maturities: March 18, 2014; April 16, 2014; May 21, 2014; June 18, 2014; July 16, 2014; October 20, 2014. We divide the option data into 3 categories according to the moneyness $S/K$, where $S$ and $K$ denote respectively the VIX and the exercise price: Out-of-the-Money (OTM), At-the-Money (ATM), In-the-Money (ITM). As shown in Table \ref{Table 1.5}, we describe the sample by exhibiting the average prices and corresponding sample size for each moneyness category. Note that there are totally 110 call options with ITM, ATM, OTM options taking up $44.5\%$, $27.2\%$ and $28.1\%$ respectively.

\section{Model Estimation and Testing}
The VIX index and VIX options both contain information about the future dynamics of VIX index in 4/2 model. Thus, calibrating to both index and utilizing the combined informational content makes empirical sense. First the model are calibrated to the VIX index by minimizing the following objective function: 
\begin{equation}\label{eq7.1}
S_1=\sum_{n=1}^{N_1}\abs{\left(\frac{\mathrm{VIX}^{Quoted}_n}{100}\right)^2-\mathrm{VIX}^{Model}_n}
\end{equation}
using a gradient-based minimization algorithm where $N_1$ is the number of the VIX data for the 2014 one-year period. Minimizing the function in Eq. (\ref{eq7.1}) can lead to different VIX-calibrated optimal parameters. Second, let those VIX-calibrated optimal parameters be starting parameters. The model are calibrated to the VIX options by minimizing the follow objective function:
$$S_2=\sum_{n=1}^{N_2}\abs{C_n-\hat{C}_n}$$
using a gradient-based minimization algorithm where $N_2$ is the number of the sample data, $C_n$ and $\hat{C}_n$ represent the market price and the 4/2 model price respectively. Then, the optimum parameters can contain both VIX index and VIX option information. One should note that the penalty function is needed to be included in our estimation for satisfying Feller condition. The same estimation method can also be applied to the Heston model and the 3/2 model by taking $b=0$ and $a=0$ respectively. In order to test how close the model is fitted to the data point set, we compare each model's performance by VIX average relative pricing errors (VIXARPE) and option average relative pricing errors (OPTARPE): 
$$\mathrm{VIXARPE}=\frac{1}{N_1}\sum_{n=1}^{N_1}\frac{\abs{\mathrm{VIX}_n-\widehat{\mathrm{VIX}}_n}}{\mathrm{VIX}_n}$$
$$\mathrm{OPTARPE}=\frac{1}{N_2}\sum_{n=1}^{N_2}\frac{\abs{C_n-\hat{C}_n}}{C_n}$$
which is a measure to report the average error in percentage. These figures can be used to compare the models in terms of the explanatory power. Here Table \ref{Table 2} gives estimation result of the methods.

\begin{table}[h]
\caption{\\Estimates of risk-neutral parameters. By minimizing the sum of the absolute errors between the VIX index and the model-determined VIX index for 2014 one-year period, and by minimizing the sum of the absolute errors between the market price and the model-determined price for each option on March 13, 2014,  the estimated parameters for a given model are reported. VIXARPE and OPTARPE in the given row groups display the mean absolute percentage errors for VIX index and options, respectively.}
\label{Table 2}
\centering
\begin{tabular*}{\textwidth}{@{\extracolsep{\fill}}lllll}
\hline
Parameters &  4/2 Model & 3/2 Model & Heston Model & BS Model\\
\hline
$\kappa$                &  3.893244    &  2.461431     &3.848760   &\\
$\theta$                  &  0.232984     &  45.452891  &0.040210 &\\
$\sigma$                &   0.445445  &  -9.249878     &0.429494 &  0.613439\\
$a$                        &   0.9914564     &  0               &    1 \\
$b$                        &    0.180281    &   1                &   0  \\
$\lambda$             &   0.141478    &  0.143312     &0.429609\\
$\mu$                    &   -0.141627   &   -0.000067   &-0.000015\\
$\eta$                    &   0.178443   &   0.000072      &  0.0000127 \\
\\
$\mathrm{VIXARPE}$               &   14.22\%  &   12.52\%  & 25.41\%   \\
$\mathrm{OPTARPE}$               &   14.25\%  &   14.84\%  & 15.49\%  &  26.15\%\\
\hline
\end{tabular*}
\end{table}

Parameters estimated from the historical VIX data for the recent one-year period are closer to the VIX future market price (Zhang and Zhu \cite{ZZ2006}) so that those parameters are better candidates for pricing future. We will have a discuss in the following section. In Table \ref{Table 2}, we add Black Scholes model termed BS, which is regarded as a benchmark to compare with other models. It can be seen that 4/2 model has the lowest error OPTAPRE. This supports 4/2 model as a better model than Heston and 3/2 model for describing the behavior of the VIX option.

\section{Testing the VIX Future and Option Formula}
In this section, we provide a comparison of VIX derivatives market prices with our three models prices. Futures are compared by using several different kinds of maturities. In addition, we plot future price as a function of time to maturity. As for option, we assess and investigate the model performances from two angles: 1) in-sample pricing errors, 2) out-of-sample pricing errors. The following ARPE $$\mathrm{ARPE}=\frac{1}{N}\sum_{i=1}^{N}\frac{\abs{Q_i^{Market}-Q_i^{Model}}}{Q_i^{Market}}$$ error measure reports the average pricing error in percentage. Notice that $Q_i$ is used to denote quotes on futures and call options.

At first, we concentrate on four alternative models to price futures: 4/2 model, 3/2 model and Heston model. We choose four kinds of VIX future on March 13, 2014 traded in the CBOE (VIX/H14, VIX/J14, VIX/K14, VIX/M14) to test three future price formulae. Here four different futures represent four different maturities: March 14, 2014; April 14, 2014; May 14, 2014; June 14, 2014 with time to maturities being 1, 32, 62 and 93 days, respectively. The VIX level at March 13, 2014 was 16.22. For each future, we calculate corresponding future average relative pricing errors (APRE) presented in Table \ref{Table 3}. To get a sense of the capability of each model capturing future price, we plot those of the VIX future value as a function of time to maturity under Heston, 3/2 and 4/2 model (Figure \ref{Figure 2}(a-c)), using parameter values as given in Table \ref{Table 2}. The last graph in Figure \ref{Figure 2} reflects the performance of three models. Here time to maturity is  annualized in Figure \ref{Figure 2}. 

\begin{table}
\caption{\\VIX future pricing errors. For given 4/2, 3/2 and Heston model, we compute the price of each future on March 13, 2014, with a total of 9 futures, using the parameters in Table \ref{Table 2}. The group under the heading ARPE reports the average pricing error between the market price and the model price for each future. Total ARPE in the given last row group display the total ARPE with four futures.}
\label{Table 3}
\centering
\begin{tabular*}{\textwidth}{@{\extracolsep{\fill}}llll}
\hline
Future & \multicolumn{3}{c}{ARPE}\\ \cline{2-4}
& 4/2 Model & 3/2 Model & Heston Model \\
\hline
VIX/H14 & 0.05\% & 0.01\% & 0.48\% \\
VIX/J14 & 2.49\% & 3.68\% & 2.73\% \\
VIX/K14 & 1.45\% & 2.87\% & 1.44\% \\ 
VIX/M14 & 0.58\% & 1.05\% & 0.43\% \\ 
\\
Total ARPE & 1.14\% & 1.90\% & 1.27\% \\ 
\hline
\end{tabular*}
\end{table}

\begin{figure}[htpb]
\caption{\\Graphs of the VIX future price as a function of time to maturity (years). VIX futures on March 13, 2014 to November 14, 2014 are computed by three models, using the parameters estimated in Table \ref{Table 2}.}
\label{Figure 2}
\subfigure[4/2 Model]{\label{4/2 Model}\includegraphics[width=0.4\textwidth]{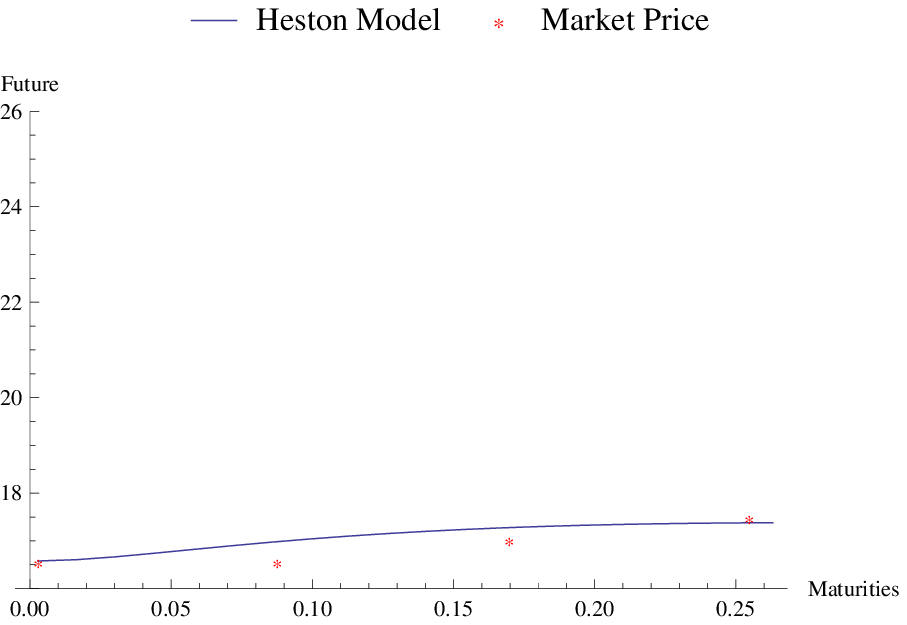}}
\subfigure[3/2 Model]{\label{3/2 Model}\includegraphics[width=0.4\textwidth]{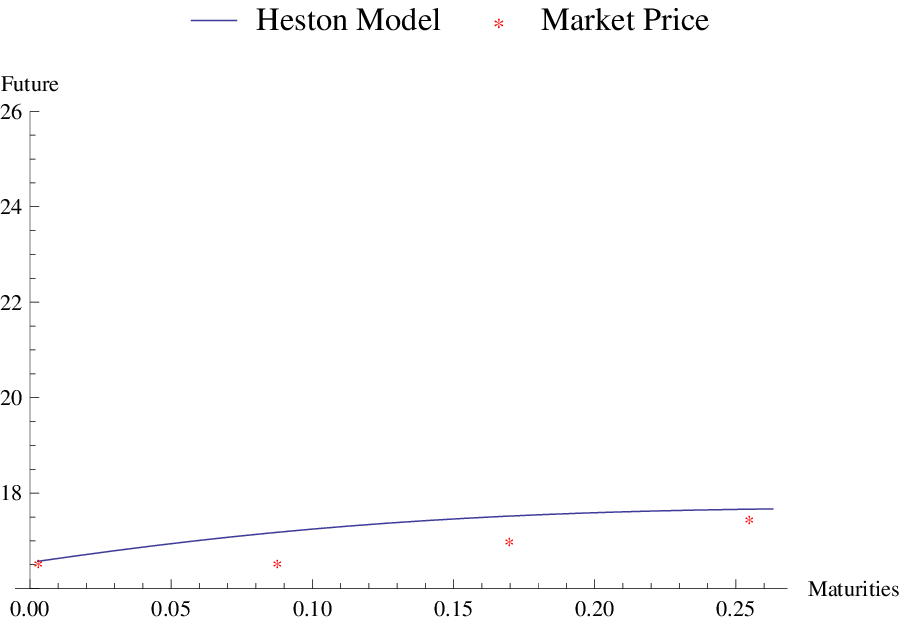}}
\subfigure[Heston Model]{\label{Heston Model}\includegraphics[width=0.4\textwidth]{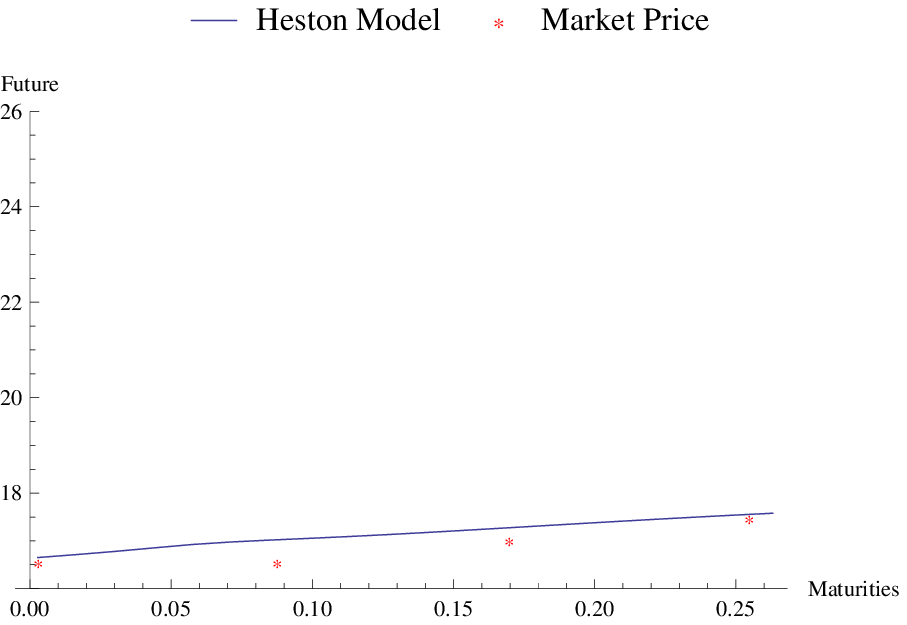}}
\subfigure[Three Models]{\label{Three Models}\includegraphics[width=0.42\textwidth]{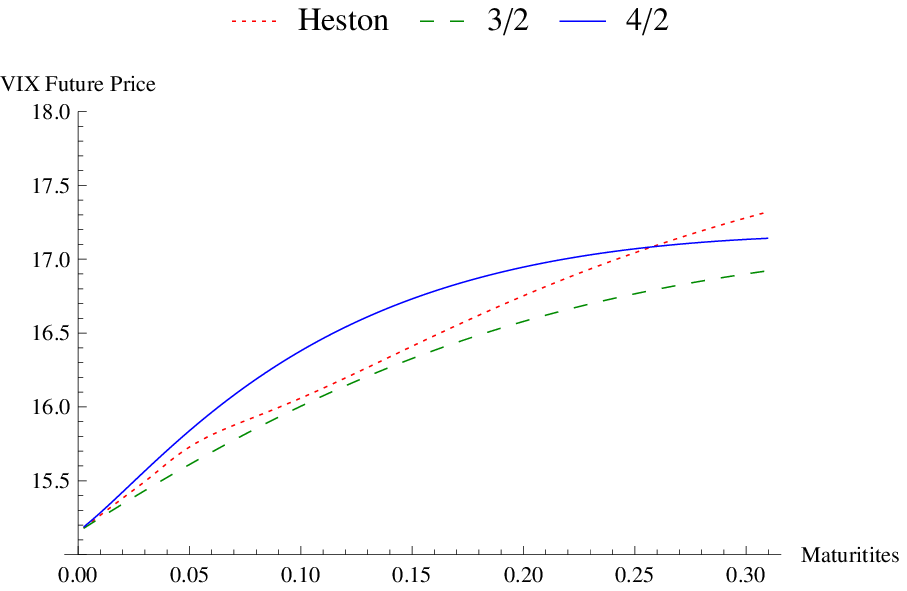}}
\end{figure}

It can be seen from Table \ref{Table 3} and Figure \ref{Figure 2} that all models have excellent performance in pricing VIX future. They have some common basic features. Firstly under each model the VIX future roughly increase with the future maturities. Secondly, we note that all model futures with maturities of 1 day, 32 days, 62 days and 93 days have very low ARPE. Finally, we can see from Figure \ref{Three Models} that 4/2 model has new features. Compared with other models, 4/2 model has a new feature of more upward-bulging curvature, with $-\frac{\partial^2F}{\partial t^2}$ taking a larger value. After 80 maturities, the curve of 4/2 model tends to be flat to protect the VIX future from unrealistically ascending.

On the other hand, we turn our concentrations on pricing options, with 4/2 model, 3/2 model and Heston model, using option data presented in Table \ref{Table 1.5}. The results are presented in two tables and two graphs. The graphs in Figure \ref{Figure 4} and Figure \ref{Figure 4.1} are VIX option price as a function of strikes and VIX values, respectively. Those graphs are quite informative to explore reactions of each model when facing extreme situation in strikes or VIX values. The statistics in Table \ref{Table 4} and Table \ref{Table 5} are respectively in-sample pricing errors and corresponding out-of-sample pricing errors. In-sample pricing errors are quite informative to explain the internal working of each model. Moreover, out-of-sample pricing errors help us understanding predictive qualities of each model.

\begin{figure}[htpb]
\caption{\\Graphs of the VIX option price as a function of strikes. VIX options on March 13, 2014 with 33 different strikes are computed by three models, using the parameters estimated in Table \ref{Table 2}. Notice that the strike in graphs is the fact divided by 100.}
\label{Figure 4}
\subfigure[Heston model]{\label{VIX1422A13}\includegraphics[width=0.32\textwidth]{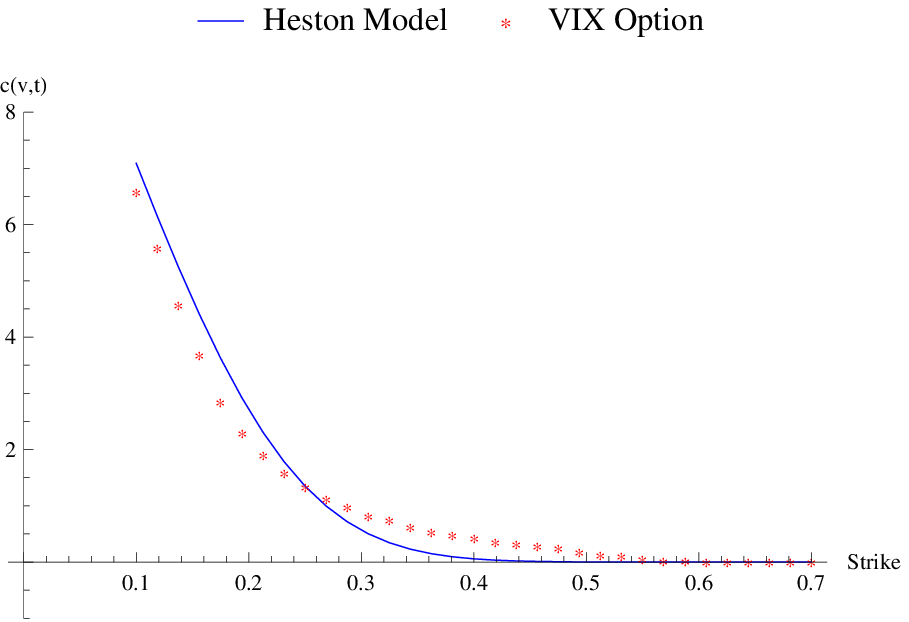}}
\subfigure[3/2 model]{\label{VIX1419B25}\includegraphics[width=0.32\textwidth]{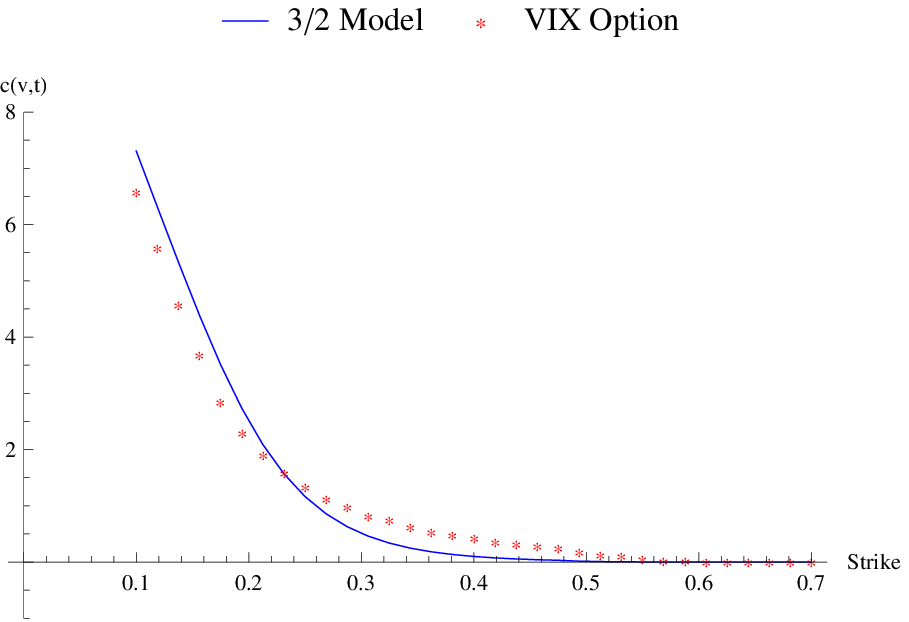}}
\subfigure[4/2 model]{\label{VIX1415D35}\includegraphics[width=0.32\textwidth]{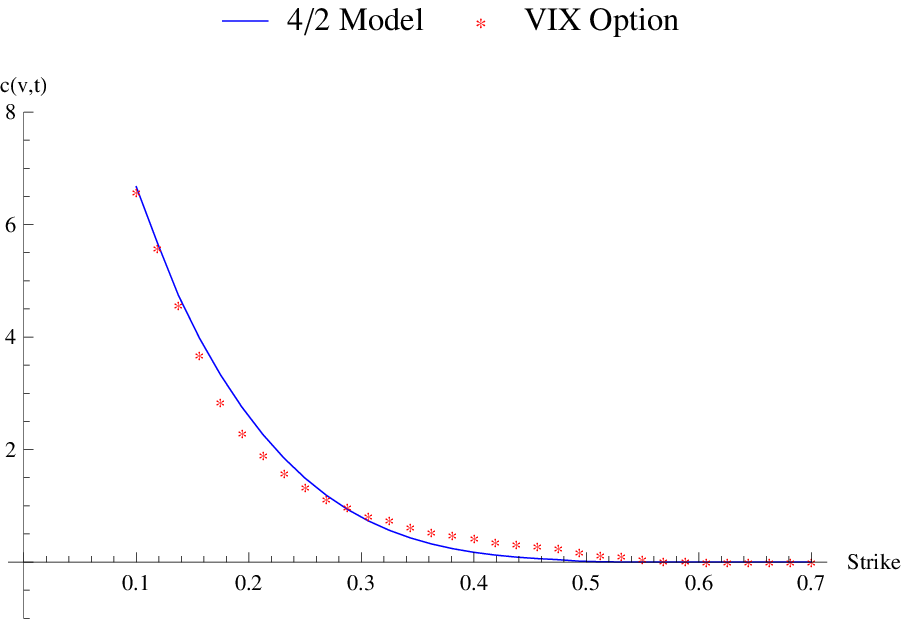}}
\end{figure}

To get a sense of the capability of each model capturing features of option with different strikes we choose 33 VIX options traded on March 13, 2014 with same maturity and different strikes from 10 to 70 and plot those of options respectively with three models as reflected in Figure \ref{Figure 4}. Regarding three comparative graphs, we find that 4/2 model can best fit the model price, especially at the lower strikes. On the other hand, however, at the higher strikes, all distributions implied by three models are more or less to understate the probability of extreme movement in the VIX. From Figure \ref{Figure 1} it is apparent that the VIX can exhibit spikes making large movements in the VIX possible. This leads to higher out-of-money call option prices. Compared with other models, 4/2 model goes down more slowly with strikes to fit market price better, especially the strikes between 28 and 40. 

Similarly, to get a sense of the capability of each model capturing features of option with different VIX values, we choose 27 different VIX values from 11.5 to 28.5 and plot those options with four fixed $t=\frac{1}{365}$, $\frac{5}{365}$, $\frac{10}{365}$ and $\frac{30}{365}$ as maturities and fixed strike $K=0.18$ as reflected in Figure \ref{Figure 4.1}. As expected, all of models have common basic feature, which the call option value increase with VIX. It can seen that for larger values of the VIX, the values of call option decrease with maturities whereas for small values of the VIX the call value increases with maturities. This is to be expected as when time approaches expiry there is less time for VIX to revert to its mean value. Heston and 3/2 models also understate the call option when dealing with small VIX values. Fortunately, 4/2 model accelerates slowly and steadily with VIX and performs more better at lower VIX.

\begin{figure}[htpb]
\caption{\\Graphs of the VIX option price as a function of VIX values. Fixed 4 maturities, VIX options with different VIX values are computed by three models, using the parameters estimated in Table \ref{Table 2}.}
\label{Figure 4.1}
\subfigure[Heston model]{\label{Heston model}\includegraphics[width=0.32\textwidth]{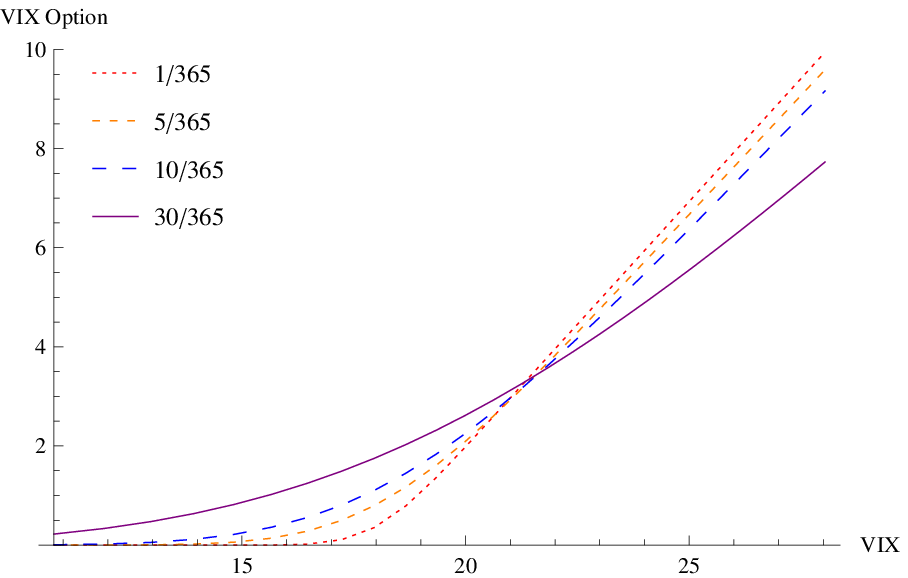}}
\subfigure[3/2 model]{\label{3/2 model}\includegraphics[width=0.32\textwidth]{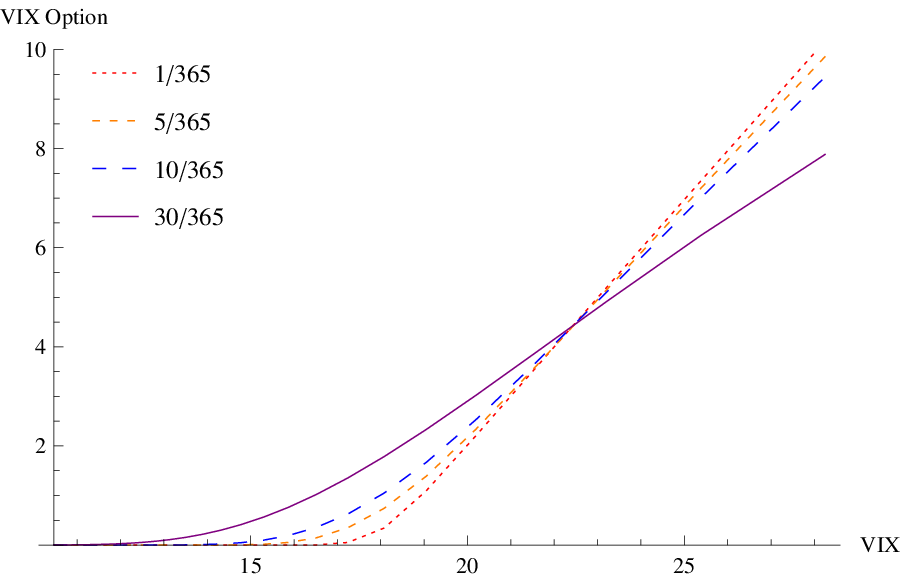}}
\subfigure[4/2 model]{\label{4/2 model}\includegraphics[width=0.32\textwidth]{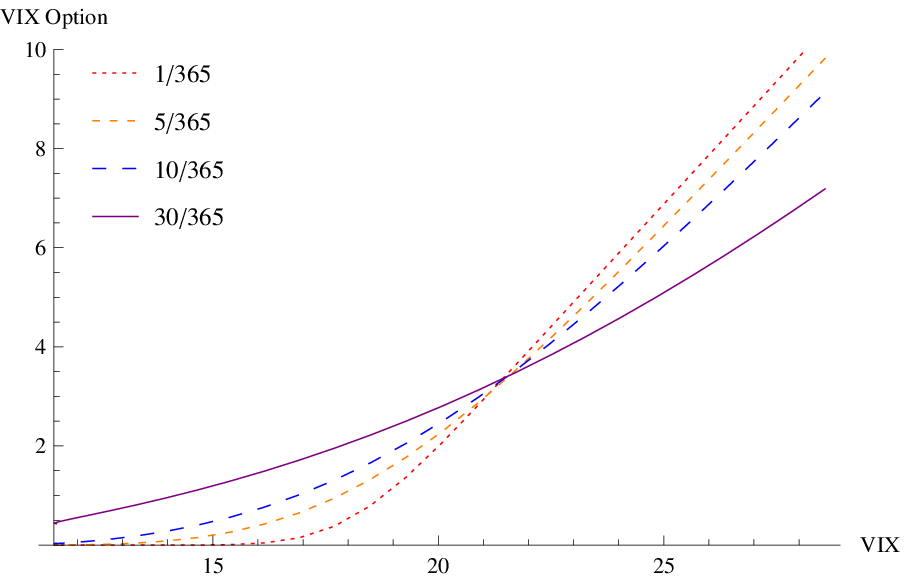}}
\end{figure}

\begin{table}
\caption{\\In-sample pricing errors. For a given model, we compute the price of each option on March 13, 2014, with a total of 53 options, using the parameters estimated in Table \ref{Table 2}. The group under the heading APRE reports the sample average pricing error between the market price and the model price for each option in a given moneyness category. Total ARPE in the given row group displays the total ARPE with 53 options.}
\label{Table 4}
\centering
\begin{tabular*}{\textwidth}{@{\extracolsep{\fill}}lp{0.9em}lp{0.9em}lp{0.9em}lp{0.9em}l}
\hline
     &  & \multicolumn{6}{c}{ARPE}\\ 
\cline{3-9}
Moneyness $M$ & & 4/2 model & & 3/2 model & & Heston model & & BS model\\
\hline
$(0.3,0.5)$ & & 5.82\% & & 6.73\% & & 6.39\% & & 7.80\%\\
(7)          &  &              &  &               & &               & &          \\
$(0.1,0.3)$ & & 9.49\% & & 9.52\%  & & 14.01\% & & 13.44\%\\
(12)          &  &              &  &               & &               & &          \\
$(-0.1,0.1)$ & & 12.39\% & & 8.12\%  & & 11.46\% & & 25.08\%\\
(16)          &  &              &  &               & &               & &          \\
$(-0.3,-0.1)$ & & 17.06\% & & 22.50\%  & & 16.99\% & & 40.85\%\\
(14)          &  &              &  &               & &               & &          \\
$(-0.5,-0.3)$ & & 40.92\% & & 44.99\%  & & 46.74\% & & 49.19\%\\
(4)          &  &              &  &               & &               & &          \\
             &  &                 & &               & &              & &           \\
Total ARPE  & & 14.25\% &  & 14.84\% &  & 15.49\% &  &   26.15\% \\             
\hline
\end{tabular*}
\end{table}

To detailedly investigate the relationship between model and market prices and whether 4/2 model performance of options is better than others or not, for each option, we calculate the corresponding models prices, ARPE and moneyness $M$, defined by $M=\ln{\frac{\mathrm{VIX}_{Market}}{Strike}}$ for the three models. Note that a positive (resp. negative) $M$ value denotes that the option is in (resp. out)-of-the-money and the larger the magnitude of $M$, the deeper it is. Results are grouped into the range of Moneyness $M$: (0.3, 0.5), (0.1, 0.3), (-0.1, 0.1), (-0.3, 0.1) and (-0.5, -0.3), which are listed in Table \ref{Table 4}. According to the magnitude of in-sample total ARPE, we find that 4/2 model can best fit the market prices. Regarding the reported ARPE value, all models deliver much small ARPE for deep ITM option and large ARPE for OTM option fitting, which is in line with the estimation method that assigns more weight to the ITM options and less weight to the OTM options. 4/2 model still has the fewest errors. However, 3/2 model model results in a slightly smaller error than 4/2 model when the options are at-the-money, due to the reason that 3/2 model fits the ATM options better than others, which can be seen in Table \ref{Table 4}, and the poor fitting in ITM and OTM options has increased the value of 3/2 model's ARPE. Compared to the other three models, BS gives much errors by looking at values of ARPE. The generations of BS all provide large pricing improvements, for the most part, in OTM and ATM options, with ARPE 49.19\% and 25.08\%, respectively. On the whole, we would like to draw the conclusion that 4/2 model shows the best in-sample performance, being capable of fitting market prices, and 3/2 model on the other hand, gives competitive performance in consideration of its fewer parameters requirements and good qualifying performance in ATM options.

\begin{table}
\caption{\\Out-of-sample pricing errors. For a given model, we compute the price of each option on March 14, 2014, with a total of 57 options, using the parameters estimated in Table \ref{Table 2}. The group under the heading APRE reports the sample average pricing error between the market price and the model price for each option in a given moneyness category. Total ARPE in the given row group displays the total ARPE with 57 options.}
\label{Table 5}
\centering
\begin{tabular*}{\textwidth}{@{\extracolsep{\fill}}lp{0.9em}lp{0.9em}lp{0.9em}lp{0.9em}l}
\hline
  &  & \multicolumn{6}{c}{ARPE}\\ 
\cline{3-9}
Moneyness $M$ & & 4/2 model & & 3/2 model & & Heston model & & BS model\\
\hline
$(0.3,0.5)$ & & 5.82\% & & 9.46\% & & 10.77\% & & 15.03\%\\
(12)          &  &              &  &               & &               & &          \\
$(0.1,0.3)$ & & 12.20\% & & 12.78\%  & & 14.67\% & & 23.44\%\\
(18)          &  &              &  &               & &               & &          \\
$(-0.1,0.1)$ & & 14.23\% & & 10.13\%  & & 8.76\% & & 35.85\%\\
(14)          &  &              &  &               & &               & &          \\
$(-0.3,-0.1)$ & & 26.20\% & & 29.41\%  & & 32.03\% & & 47.45\%\\
(13)          &  &              &  &               & &               & &          \\
             &  &                 & &               & &              & &           \\
Total ARPE  & & 14.55\% &  & 15.22\% &  & 16.36\% &  &   30.19\% \\             
\hline
\end{tabular*}
\end{table}

Now that the in-sample fit is increasingly better from BS, Heston, 3/2 and 4/2, one may argue that the outcome can be biased due to the larger number of parameters and the over-fitting to the data. Moreover, a model that performs well in fitting option prices may have poor predictive qualities. Given these concerns, we design the out-of-sample test by using parameters estimated in Table \ref{Table 2} as inputs to compute the model-based option prices on March 14, 2014 and report the corresponding ARPE in Table \ref{Table 5}. According to the results, almost all models deliver larger errors for out-of-sample option fitting than in-sample. Regarding the reported total ARPE, we find that 4/2 model generates the lowest ARPE with a little increase of ARPE compared with in-sample and can best fit the out-of-sample market prices. As for OTM options, 3/2 and Heston still generate many pricing errors and 4/2 model performs slightly better. The overall ATM options ARPE for 4/2, 3/2, Heston and BS are respectively 14.23\%, 10.13\%, 8.76\% and 35.85\%. It is accidentally surprising that for Heston model, ARPE for ATM options is 8.76\%, which is lower than 11.46\% of in-sample. 3/2 model keeps relatively lower errors in ATM options. Fortunately, 4/2 model still remains the excellent performance in matching ITM options. All the models except 4/2 model generate larger percentage errors in the out-of-sample test, which shows that 4/2 model is quite competent in out-of-sample pricing.

According to the foregoing results, we can draw conclusion that traders should choose the 4/2 model in most cases. These observations make 4/2 model a suitable candidate for modeling VIX derivatives. 

\section{Conclusion}
We have extended both the popular Heston and 3/2 models in unified framework and plus jumps by keeping their analytical tractability. A new model considers as instantaneous variance the superposition of the 1/2 and the 3/2 terms, which explains the name 4/2 model. In order to make sure that the discounted stock price is a martingale under our risk neutral probability measure we derive the conditions to guarantee this. A closed-form solution for the joint transform has been derived so that equity and realized-variance derivatives can be priced. Since derivatives on the VIX hit the market in 2004, the large quantities of literatures for a solution to the VIX derivatives have steadily grown. However, it is a vacuum for describing the dynamics of S\&P500 with 4/2 model and pricing VIX derivatives. Finally, we point out that the 4/2 model provides a nice example of application of the powerful theory of Lie symmetries so that we derive a general analytic solution for the pricing of equity, VIX future and option.

Inspired by the theoretical and practical analysis, we have estimated parameters in term of their ability to explain the dynamics of VIX and VIX option. Models of the Heston, 3/2, 4/2 and BS are compared on one-day VIX future and option price data. We would like to draw a conclusion that 4/2 model is found to have the best overall performance in pricing VIX future and option. The 3/2 model ranks next, followed by the Heston. The significant performance gap between the BS model and the remaining models suggests that all the generations of the BS are more efficient with greater improvements. Summarizing all findings, the introduction of 4/2 model results in a significant improvement in VIX derivatives pricing, suggesting that 4/2 model may be a more accurate and effective guide to traders. However, the 3/2 model itself still have the competitive performance in ATM options. 

Finally, there are several works remained which have not been discussed in this paper and we hope to figure them in the future. For example, adaptive algorithms and discretization schemes should be provided to exactly simulate the 4/2 model. Furthermore, whether a joint calibration of S\&P500 and S\&P500 option data would make 4/2 model accurately pricing S\&P500 derivatives is still unknown. The corresponding hedging test should be worth exploring. Those topics also deserve more future research and will be the subject of future work.

\subsection*{Acknowledgments}
This work is supported by the National Natural Science Foundation of China (No. 11171304 and No. 71371168) and Zhejiang Provincial Natural Science Foundation of China (No. Y6110023). We also want to thank Mark McClure of the \textit{Mathematics StackExchange} community for providing us with \textit{Mathematica} codes\footnote{See \url{http://math.stackexchange.com/a/1441092/58882}.} that cope with numerical integration of the form 
$$\int_{a_1}^{b_1}\sqrt{\int_{a_2}^{b_2}f(x,y)dx}\ g(y)dy$$
(e.g., Eqs. \eqref{5.3} and \eqref{5.4}).

\end{document}